%% file: main.tex
\documentclass[10pt,twocolumn]{article}

\usepackage[letterpaper,margin=0.75in]{geometry}
\usepackage[T1]{fontenc}
\usepackage[utf8]{inputenc}
\usepackage{lmodern}
\usepackage{microtype}
\usepackage{amsmath,amssymb,amsfonts,mathtools}
\usepackage{amsthm}
\usepackage{bm}
\usepackage{booktabs}
\usepackage{tabularx}
\usepackage{graphicx}
\usepackage{xcolor}
\usepackage[numbers,sort&compress]{natbib}
\usepackage[hidelinks]{hyperref}
\usepackage[nameinlink,noabbrev]{cleveref}

\input{macros}
\input{theorem-envs}
\input{paper-meta}

\begin{document}

\maketitle

{\small
\noindent\textbf{Abstract.}
\input{sections/00_abstract}

\vspace{0.5\baselineskip}
\noindent\textbf{Keywords:}
Binary aggregation;
Information elicitation without verification;
Action-based mechanism;
Incentive compatibility;
Individual rationality;
Mechanism design;
Game theory;
Peer prediction;
Crowdsourcing;
Reward--penalty ratio bounds
}

\input{sections/01_introduction}
\input{sections/03_model}
\input{sections/04_mechanism_overview}
\input{sections/05_rho_space_bounds}
\input{sections/06_equilibrium_and_feasibility}
\input{sections/07_extensions_tier2_tier3}
\input{sections/08_validation_and_sensitivity}
\input{sections/09_conclusion_and_scope}
\input{sections/02_related_work}

\section*{Acknowledgments}
\input{sections/11_acknowledgments}

\bibliographystyle{plainnat}
\bibliography{references}

\appendix
\input{appendix/appA_notation_and_scenarios}
\input{appendix/appB_tier1_payoff_decomposition}
\input{appendix/appC_aggregated_and_normalized_bounds}
\input{appendix/appD_cost_adjusted_bounds}
\input{appendix/appE_tier2_tier3_details}
\input{appendix/appF_monte_carlo_and_sensitivity_details}

\end{document}

%% file: macros.tex
\newcommand{\NA}{N_A}
\newcommand{\Nc}{N_c}
\newcommand{\BR}{B_{\mathrm R}}
\newcommand{\BP}{B_{\mathrm P}}
\newcommand{\rrho}{\rho}
\newcommand{\G}{G}
\newcommand{\Y}{Y}
\newcommand{\M}{M}
\newcommand{\U}{U}
\newcommand{\eps}{\varepsilon}
\newcommand{\cstrat}{c}
\newcommand{\ncstrat}{nc}

\newcommand{\dRbar}{\Delta\overline{R}}
\newcommand{\dPbar}{\Delta\overline{P}}
\newcommand{\IR}{\mathrm{IR}}
\newcommand{\IC}{\mathrm{IC}}
\newcommand{\Prb}{\Pr}
\newcommand{\TOne}{\mathrm{T1}}
\newcommand{\TTwo}{\mathrm{T2}}
\newcommand{\TThree}{\mathrm{T3}}

\newcommand{\rwhat}{\widehat{rw}}
\newcommand{\pnhat}{\widehat{pn}}

%% file: theorem-envs.tex
\theoremstyle{definition}

\theoremstyle{plain}
\newtheorem{theorem}{Theorem}

\newtheorem{proposition}{Proposition}
\newtheorem{corollary}{Corollary}

\theoremstyle{remark}
\newtheorem{remark}{Remark}

%% file: paper-meta.tex
\title{A Tunable Incentive Mechanism for Binary Aggregation Without Verification}

\author{
Chien-Chih Chen \qquad Wojciech Golab\\[0.5em]
Department of Electrical and Computer Engineering\\
University of Waterloo\\
Waterloo, Ontario, Canada\\[0.5em]
\texttt{j2255che@uwaterloo.ca} \qquad
\texttt{wgolab@uwaterloo.ca}
}

\date{June 2026}

%% file: sections/00_abstract.tex
Binary aggregation without verifiable ground truth arises when agents' reports must be aggregated without access to gold-standard labels. This paper studies a tunable reward--penalty mechanism for binary aggregation without verification. Agents choose between a conforming strategy, which reports an informative private signal, and a non-conforming strategy, which follows a deterministic prior-informed report rule. For this mechanism, we derive cost-adjusted sufficient conditions for incentive compatibility and individual rationality as bounds on the reward--penalty ratio. The analysis identifies feasible ratio regions, cases in which ratio adjustment restores feasibility, and parameter regimes in which no ratio satisfies both constraints under the modeled construction. We also state a conditional all-conforming Nash equilibrium result within the restricted strategy set. Entropy-based scaling and stake-weighted redistribution are treated as extensions, with stake-weighted redistribution inducing agent-specific incentive constraints. Numerical checks support the closed-form Tier~1 quantities and illustrate threshold sensitivity.

%% file: sections/01_introduction.tex
\section{Introduction}
\label{sec:introduction}

Many multi-agent systems require aggregating binary reports when the
ground-truth label is unavailable or cannot be directly verified. Examples
include peer review, expert forecasting, policy assessment, and crowdsourced
evaluation tasks in which a central evaluator cannot check each response against
a gold-standard answer. In such settings, the mechanism cannot reward agents for
matching the ground truth directly. It must instead use information contained in
the profile of submitted reports. This setting is closely related to information
elicitation and crowdsourcing quality control without direct verification
\cite{jin2020technical,faltings2022game}.

Several nearby lines of work address related but different problems. Majority voting is simple and widely used in crowdsourcing systems \cite{mercier2019majority,xu2018reward,huang2019crowdsourcing}, but it does not by itself calibrate incentives for costly informative reporting. Output agreement, peer prediction, and Bayesian Truth Serum use agreement, report correlations, or belief reports to elicit information \cite{von2004labeling,waggoner2014output,miller2005eliciting,prelec2004bayesian,kong2019information,witkowski2012robust,radanovic2013robust,shnayder2016informed,schoenebeck2020two,liu2023surrogate}. Classical label aggregation methods instead estimate latent labels and worker reliability from noisy reports \cite{dawid1979maximum,whitehill2009whose,zhou2012learning,kim2012bayesian,venanzi2014community,raykar2010learning}. Rather than eliciting beliefs or estimating latent labels, this paper studies incentive calibration for an action-based reward--penalty construction. The mechanism uses only submitted reports, without eliciting agents' beliefs.

Agents choose between a conforming strategy \(\cstrat\), which reports an
informative private signal, and a non-conforming strategy \(\ncstrat\), which
ignores the private signal and follows a deterministic prior-informed report
rule (\Cref{sec:model}). The mechanism assigns rewards and penalties
according to agreement with the final aggregation outcome, not according to
direct verification against the latent ground truth. The central design
parameter is the reward--penalty ratio \(\rrho:=\BR/\BP\).

As a concrete construction, we introduce a tunable three-tier reward--penalty
mechanism, which we call the Tri-Tier Tuning Mechanism (3TTM). Tier~1 uses an
equal-split majority-based reward--penalty rule. Tier~2 applies positive
entropy-based scaling to Tier~1 payoff summands. Tier~3 redistributes selected
base-tier reward and penalty pools according to stake shares, which can make
incentive conditions agent-specific.

The main technical question is when reward and penalty magnitudes can be
calibrated so that conforming behavior is incentive-compatible relative to
\(\ncstrat\) and individually rational. We derive cost-adjusted sufficient
conditions in \(\rrho\)-space. For the base tiers, the normalized reward and
penalty gaps determine whether incentive compatibility (IC) imposes a lower or
an upper bound on \(\rrho\). Thus the feasible region is not always a single
lower threshold: when IC gives an upper bound and individual rationality (IR)
gives a lower bound, the two regions may fail to overlap for this construction.
We also state a conditional all-conforming Nash equilibrium result within the
restricted strategy set \(\{\cstrat,\ncstrat\}\).

\paragraph{Contributions.}
The paper makes three contributions. First, it formulates 3TTM as a tunable action-based reward--penalty construction for binary aggregation without verification. Second, it derives sufficient IC and IR conditions as reward--penalty ratio bounds, including the case in which IC is an upper-bound rather than a lower-bound requirement. These conditions are specific to the proposed construction, not to the problem in general. Third, it gives a conditional all-conforming equilibrium result and identifies parameter regimes in which these sufficient conditions admit no common feasible ratio. Numerical checks are used only to validate the closed-form Tier~1 quantities and to illustrate threshold sensitivity; details are deferred to the appendix. The analysis is restricted to the strategy set \(\{\cstrat,\ncstrat\}\), the majority-honest regime, fixed strategy-level effort costs, and the stated signal model.

\paragraph{Organization.}
\Cref{sec:model} defines the signal environment, the one-shot game over the
restricted strategy set, and the IC and IR requirements. \Cref{sec:mechanism}
instantiates the three-tier reward--penalty construction. \Cref{sec:rho-bounds}
derives the cost-adjusted IC and IR conditions in ratio space and characterizes
the feasible ratio regions, and \Cref{sec:equilibrium-feasibility} connects
these conditions to the conditional all-conforming equilibrium and the
infeasibility result. \Cref{sec:extensions} describes the entropy-scaling and
stake-redistribution extensions, and \Cref{sec:validation-sensitivity} reports
the numerical consistency checks and sensitivity analysis. The full
scenario-wise finite-sum derivations underlying the ratio-space coefficients are
collected in the appendices.

%% file: sections/03_model.tex
\section{Model}
\label{sec:model}

This section formalizes the signal environment, the one-shot game over the
strategy set \(\{\cstrat,\ncstrat\}\), the majority-based aggregation and
reward--penalty rule, and the IC and IR requirements used throughout.

\subsection{Environment}

Let $\NA$ denote the number of active agents. The latent binary label is
$\G \in \{t,f\}$ with prior $\Prb(\G=t)=p$. The mechanism does not observe $\G$.
An agent choosing the conforming strategy $\cstrat$ observes an informative private signal
$X_a \in \{t,f\}$ satisfying
\begin{equation}
  \Prb(X_a=\G)=1-\eps, \qquad \Prb(X_a\neq \G)=\eps,
  \label{eq:signal-model}
\end{equation}
where $\eps\in(0,1/2)$. The condition $\eps<1/2$ makes the private signal
informative: a conforming report is more likely to match the latent label than
not. Signals are conditionally independent given $\G$.

Agents choose from the restricted strategy set $S_a=\{\cstrat,\ncstrat\}$.
The conforming strategy reports $y_a=X_a$. The non-conforming strategy ignores the
private signal and follows the deterministic prior-informed rule
\begin{equation}
  V_{\ncstrat}(p)=
  \begin{cases}
    t, & p>1/2,\\
    f, & p\leq 1/2.
  \end{cases}
  \label{eq:nc-rule}
\end{equation}
All agents choosing $\ncstrat$ submit the same report $\U := V_{\ncstrat}(p)$.

Let $u$ denote the number of agents choosing $\ncstrat$, and let
$\Nc:=\NA-u$ denote the number choosing $\cstrat$. We assume the majority-honest regime
\begin{equation}
  u \leq \left\lfloor (\NA-1)/2 \right\rfloor.
  \label{eq:majority-honest}
\end{equation}

\subsection{Binary Aggregation Game}
\label{sec:binary-aggregation-game}

We model each binary-label task as a one-shot normal-form game. The set of
players is $N=\{1,\ldots,\NA\}$. Each player $a\in N$ chooses a strategy
$s_a\in S_a=\{\cstrat,\ncstrat\}$, where $\cstrat$ denotes conforming behavior
and $\ncstrat$ denotes non-conforming behavior. A strategy profile is denoted by
$s=(s_a,s_{-a})$.

Given a strategy profile and a realization of the latent label and private
signals, each agent submits a binary report $y_a\in\{t,f\}$. If
$s_a=\cstrat$, then $y_a=X_a$. If $s_a=\ncstrat$, then
$y_a=V_{\ncstrat}(p)$. Let $y=(y_1,\ldots,y_{\NA})$ denote the full report
profile.

The aggregation outcome $\Y\in\{t,f,\tau\}$ is computed from the full report
profile $y$. Payoffs are induced by the mechanism's reward--penalty rule and are
evaluated in expectation over the latent label, private signals, induced reports,
and the resulting aggregation outcome. For a tier $T \in \{\mathrm{T1},\mathrm{T2},\mathrm{T3}\}$,
we write $U_a^T(s_a,s_{-a})$ for agent $a$'s ex-ante expected payoff under
tier $T$, before subtracting the fixed strategy cost $C_a$. This notation is
used in \Cref{sec:equilibrium-feasibility} to compare unilateral deviations from
the all-conforming profile.

\subsection{Aggregation Outcome and Payoff Rule}
\label{sec:aggregation-outcome-payoff-rule}

The final aggregation outcome $\Y\in\{t,f,\tau\}$ is determined by majority vote
over all submitted reports, where $\tau$ denotes an unresolved tie. If
$\Y\in\{t,f\}$, reports matching $\Y$ are on the reward side and reports
differing from $\Y$ are on the penalty side. If $\Y=\tau$, no reward or penalty
is assigned.

For analytical decomposition, we also use the majority state $\M$ among agents
choosing $\cstrat$. This quantity is not the final aggregation outcome. It is
used only to group report profiles into payoff-distinct scenarios together with
the latent label $\G$ and the common non-conforming report $\U$. The full
scenario catalog and finite-sum payoff decomposition are given in
\Cref{app:scenarios,app:tier1-payoff-decomposition}.

For the base mechanism, agents on the reward side share a reward pool of size
$\BR$ and agents on the penalty side share a penalty pool of size $\BP$. The
reward--penalty ratio
\begin{equation}
  \rrho := \frac{\BR}{\BP}
  \label{eq:rho-definition-model}
\end{equation}
is the main tuning parameter analyzed in \Cref{sec:rho-bounds}. The penalty
scale $\BP$ determines the absolute payoff scale relative to fixed strategy
costs, while $\rrho$ determines the relative strength of reward and penalty
incentives.

\subsection{Utilities and Incentive Requirements}
\label{sec:utilities-incentives}

Let $R_a(s)$ and $P_a(s)$ denote the reward and penalty induced by the mechanism
when agent $a$ chooses strategy $s\in S_a$. Let $C_a(s)$ denote the fixed
strategy cost, and let $G_a(s)$ denote any external gain. The ex-ante expected
utility is
\begin{equation}
  \mathbb{E}[U_a(s)]
  =
  \mathbb{E}[R_a(s)]
  -
  \mathbb{E}[P_a(s)]
  -
  C_a(s)
  +
  G_a(s).
  \label{eq:expected-utility}
\end{equation}
In the baseline analysis, external gains are set to zero. Fixed effort costs are
kept in the cost-adjusted incentive conditions.

Within the restricted strategy set $S_a=\{\cstrat,\ncstrat\}$, the incentive
compatibility requirement compares conforming behavior against the
non-conforming strategy:
\begin{equation}
  \mathbb{E}[U_a(\cstrat)]
  \geq
  \mathbb{E}[U_a(\ncstrat)].
  \label{eq:ic-model}
\end{equation}
The individual rationality requirement for conforming participation is
\begin{equation}
  \mathbb{E}[U_a(\cstrat)] \geq 0.
  \label{eq:ir-model}
\end{equation}
These are ex-ante conditions, evaluated in expectation over the latent label,
private signals, induced reports, and aggregation outcomes. In particular, Equation~\eqref{eq:ic-model} compares the conforming strategy against the non-conforming
strategy within $S_a$; it is not a dominant-strategy or ex-post guarantee.

\subsection{Scope of the Model}
\label{sec:model-scope}

The model is deliberately restricted. It compares the conforming strategy with a
deterministic prior-informed non-conforming strategy. It does not model arbitrary
Byzantine behavior, collusion, belief reports, repeated interaction, or external
strategic gains. The latent label $\G$ is used for ex-ante payoff analysis but is
not observed by the mechanism. The all-nonconforming profile corresponds to
$u=\NA$ and is outside the majority-honest regime in \eqref{eq:majority-honest}.
The informative-signal assumption $\eps<1/2$ supports the conforming-strategy
payoff analysis, but it does not by itself provide a payoff-based exclusion of
the all-nonconforming profile.

%% file: sections/04_mechanism_overview.tex
\section{The Reward--Penalty Construction}
\label{sec:mechanism}

We now instantiate the model with a tunable three-tier reward--penalty
construction. The mechanism operates only on the submitted reports and the
resulting aggregation outcome \(\Y\). It observes neither the latent label
\(\G\) nor each agent's chosen strategy. Strategy labels are therefore used only
for the ex-ante analysis.

For a report profile \(y=(y_1,\ldots,y_{\NA})\) with strict aggregation outcome
\(\Y(y)\in\{t,f\}\), define
\begin{equation}
    W(y):=\{a:y_a=\Y(y)\},
    \qquad
    L(y):=\{a:y_a\ne \Y(y)\}.
    \label{eq:reward-penalty-sides}
\end{equation}
Agents in \(W(y)\) are on the reward-receiving side and agents in \(L(y)\) are
on the penalty-bearing side. If \(\Y(y)=\tau\), no reward or penalty is assigned.
Thus payoffs are based on agreement with \(\Y\), not agreement with the
unobserved latent label \(\G\).

\begin{table}[t]
\centering
\small
\caption{Mechanism components and their analytical roles.}
\label{tab:mechanism-components}
\begin{tabularx}{\columnwidth}{@{}l X X@{}}
\toprule
Tier & Mechanism role & Analytical effect \\
\midrule
Tier~1 & Equal-split reward--penalty rule & Common IC/IR bounds \\
Tier~2 & Positive entropy-based scaling & Reweighted base-tier coefficients \\
Tier~3 & Stake-weighted redistribution & Agent-specific IC/IR constraints \\
\bottomrule
\end{tabularx}
\end{table}

\paragraph{Tier~1.}
Tier~1 is the equal-split baseline. When \(\Y(y)\in\{t,f\}\), the base reward
\(\BR\) is split equally among agents in \(W(y)\), and the base penalty \(\BP\)
is split equally among agents in \(L(y)\). The realized Tier~1 payoff of agent
\(a\) is
\begin{equation}
F_a^{\TOne}(y)=
\begin{cases}
\BR/|W(y)|, & a\in W(y),\\
-\BP/|L(y)|, & a\in L(y)\text{ and }|L(y)|>0,\\
0, & \text{otherwise}.
\end{cases}
\label{eq:tier1-realized-payoff}
\end{equation}
The appendix decomposes the expectation of \eqref{eq:tier1-realized-payoff} into
six payoff-distinct scenarios and produces the scale-free reward and penalty
coefficients used in the ratio-space bounds.

\paragraph{Tier~2 and Tier~3.}
Tier~2 keeps the Tier~1 aggregation rule, scenario partition, and payoff-side
assignment, but positively scales Tier~1 reward and penalty summands using an
entropy-based decisiveness factor, extending the entropy-based incentive idea of
\cite{chen2022implementation} to the present ratio-space framework. Tier~3 keeps the
selected aggregation outcome fixed but redistributes selected base-tier reward
and penalty pools according to stake shares. Hence Tier~2 changes the numerical
coefficients in the same strategy-level ratio-space framework, while Tier~3
generally induces agent-specific IC and IR constraints. The algebraic
definitions of these extensions are given in
\Cref{app:tier2-entropy-scaling,app:tier3-stake-redistribution}.

%% file: sections/05_rho_space_bounds.tex
\section{Reward--Penalty Ratio Bounds}
\label{sec:rho-bounds}
This section gives the ratio-space incentive conditions used by the mechanism.
For the proposed reward--penalty construction, we ask when the reward and
penalty magnitudes can be calibrated so that the conforming strategy is both
incentive-compatible relative to the non-conforming strategy and individually
rational.

Let
\begin{equation}
    \rrho := \frac{\BR}{\BP}, \qquad \BP>0,
    \label{eq:rho-definition}
\end{equation}
where \(\BR\) is the base reward and \(\BP\) is the base penalty.  The
penalty scale \(\BP\) fixes the absolute payoff scale relative to effort
costs, while \(\rrho\) controls the relative strength of rewards and
penalties.  Thus, after substituting \(\BR=\rrho\BP\), effort costs appear
through the normalized terms \(c_{\cstrat}/\BP\) and
\((c_{\cstrat}-c_{\ncstrat})/\BP\).

The payoff quantities used below are induced by the scenario-wise
payoff decomposition of the mechanism.  The full finite-sum definitions
are deferred to \eqref{eq:reward-gap} and \eqref{eq:penalty-gap}. This section states the normalized quantities and the resulting
ratio-space constraints.

\subsection{Base-Tier Normalized Quantities}
\label{subsec:base-tier-quantities}

We first consider a base tier \(B\in\{\TOne,\TTwo\}\).  Tier~1 is the
majority-based equal-split payoff rule.  Tier~2 keeps the same
aggregation rule and payoff assignment rule, but reweights Tier~1
scenario-index summands by positive entropy-based scaling factors.  In
both cases, strategy-level reward and penalty quantities can be
normalized by the number of agents choosing the corresponding strategy.

Let \(\rwhat^B_{\chi}\) and \(\pnhat^B_{\chi}\) denote the scale-free
aggregated reward and penalty coefficients for strategy
\(\chi\in\{\cstrat,\ncstrat\}\) under base tier \(B\).  These quantities are
obtained after factoring out \(\BR\) from reward terms and \(\BP\) from
penalty terms.  For \(\Nc>0\), the per-agent coefficients for a conforming
agent are
\begin{equation}
    \bar r^B_{\cstrat}:=\frac{\rwhat^B_{\cstrat}}{\Nc},
    \qquad
    \bar p^B_{\cstrat}:=\frac{\pnhat^B_{\cstrat}}{\Nc}.
    \label{eq:normalized-c-coefficients}
\end{equation}
For \(u>0\), the corresponding per-agent coefficients for a
non-conforming agent are
\begin{equation}
    \bar r^B_{\ncstrat}:=\frac{\rwhat^B_{\ncstrat}}{u},
    \qquad
    \bar p^B_{\ncstrat}:=\frac{\pnhat^B_{\ncstrat}}{u}.
    \label{eq:normalized-nc-coefficients}
\end{equation}
When \(u=0\), there is no non-conforming group to normalize. The
strategy-comparison bound is therefore stated for \(u>0\).  The
all-conforming deviation comparison used for equilibrium is handled
separately in \Cref{sec:equilibrium-feasibility}.

Using these coefficients, the expected payoff before fixed effort costs
can be written as
\begin{equation}
    \BR \cdot \bar r^B_{\chi}-\BP \cdot \bar p^B_{\chi},
    \qquad \chi\in\{\cstrat,\ncstrat\}.
    \label{eq:base-tier-payoff-coefficients}
\end{equation}
The key objects for IC are the normalized scale-free reward and penalty
gaps
\begin{equation}
    \dRbar^B := \bar r^B_{\cstrat}-\bar r^B_{\ncstrat}
    =
    \frac{\rwhat^B_{\cstrat}}{\Nc}
    -
    \frac{\rwhat^B_{\ncstrat}}{u},
    \label{eq:reward-gap}
\end{equation}
\begin{equation}
    \dPbar^B := \bar p^B_{\cstrat}-\bar p^B_{\ncstrat}
    =
    \frac{\pnhat^B_{\cstrat}}{\Nc}
    -
    \frac{\pnhat^B_{\ncstrat}}{u}.
    \label{eq:penalty-gap}
\end{equation}
The sign of \(\dRbar^B\) is central: it determines whether the IC
constraint is a lower bound or an upper bound on \(\rrho\).

\subsection{Cost-Adjusted Incentive Compatibility}
\label{subsec:cost-adjusted-ic}

Let \(c_{\cstrat}\) and \(c_{\ncstrat}\) denote the fixed per-agent effort
costs of choosing strategies \(\cstrat\) and \(\ncstrat\), respectively.
For a base tier \(B\in\{\TOne,\TTwo\}\), the cost-adjusted IC condition is
\begin{equation}
    \BR\bar r^B_{\cstrat}
    -\BP\bar p^B_{\cstrat}
    -c_{\cstrat}
    \ge
    \BR\bar r^B_{\ncstrat}
    -\BP\bar p^B_{\ncstrat}
    -c_{\ncstrat}.
    \label{eq:ic-payoff-comparison}
\end{equation}
Substituting \(\BR=\rrho\BP\), dividing by \(\BP>0\), and using
\Cref{eq:reward-gap,eq:penalty-gap} gives
\begin{equation}
    \rrho\,\dRbar^B
    \ge
    \dPbar^B + \frac{c_{\cstrat}-c_{\ncstrat}}{\BP}.
    \label{eq:cost-adjusted-ic}
\end{equation}
This is the ratio-space IC constraint.

\begin{proposition}[Base-tier IC bound]
\label{prop:base-tier-ic-bound}
Fix a base tier \(B\in\{\TOne,\TTwo\}\), with \(u>0\) and \(\BP>0\).  If
\(\dRbar^B\neq 0\), define
\begin{equation}
    \rrho^B_{\IC}
    :=
    \frac{\dPbar^B + (c_{\cstrat}-c_{\ncstrat})/\BP}{\dRbar^B}.
    \label{eq:cost-adjusted-ic-threshold}
\end{equation}
Then the cost-adjusted IC condition is equivalent to
\begin{equation}
    \rrho
    \begin{cases}
    \ge \rrho^B_{\IC}, & \text{if } \dRbar^B>0,\\[3pt]
    \le \rrho^B_{\IC}, & \text{if } \dRbar^B<0.
    \end{cases}
    \label{eq:ic-bound-direction}
\end{equation}
If \(\dRbar^B=0\), the ratio threshold is undefined and the IC condition
reduces to
\begin{equation}
    \dPbar^B + \frac{c_{\cstrat}-c_{\ncstrat}}{\BP} \le 0.
    \label{eq:ic-degenerate-case}
\end{equation}
\end{proposition}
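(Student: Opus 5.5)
The argument is elementary algebra on the linear inequality \eqref{eq:cost-adjusted-ic}. I would first confirm that this inequality is genuinely equivalent to the payoff comparison \eqref{eq:ic-payoff-comparison}, not merely implied by it.

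Starting from \eqref{eq:ic-payoff-comparison}, I move all terms to the left-hand side and group them by pool. This gives
\[
\BR\bigl(\bar r^B_{\cstrat}-\bar r^B_{\ncstrat}\bigr)
-\BP\bigl(\bar p^B_{\cstrat}-\bar p^B_{\ncstrat}\bigr)
-(c_{\cstrat}-c_{\ncstrat})\ge 0 .
\]
Substituting \(\BR=\rrho\BP\) and the definitions \eqref{eq:reward-gap} and \eqref{eq:penalty-gap} turns this into \(\BP\bigl(\rrho\dRbar^B-\dPbar^B\bigr)-(c_{\cstrat}-c_{\ncstrat})\ge 0\). The gap quantities are well defined because \(u>0\) and \(\Nc\ge \NA-\lfloor(\NA-1)/2\rfloor>0\) under \eqref{eq:majority-honest}. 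Dividing by \(\BP>0\) preserves the inequality direction and is reversible. Hence IC holds if and only if
\[
\rrho\,\dRbar^B\ge K, \qquad K:=\dPbar^B+(c_{\cstrat}-c_{\ncstrat})/\BP .
\]

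Next I split on the sign of \(\dRbar^B\).

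\begin{itemize}
\item If \(\dRbar^B>0\), dividing both sides by it keeps the direction and gives \(\rrho\ge K/\dRbar^B=\rrho^B_{\IC}\).
\item If \(\dRbar^B<0\), dividing by a negative number reverses the inequality and gives \(\rrho\le\rrho^B_{\IC}\).
\item If \(\dRbar^B=0\), the left-hand side is identically \(0\) for every \(\rrho\). The condition then becomes \(0\ge K\), which is exactly \eqref{eq:ic-degenerate-case}, and no threshold on \(\rrho\) can be defined.
\end{itemize}

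In each case the step used is reversible, namely multiplying or dividing by a nonzero constant. So each case yields an equivalence, not just a sufficient condition.

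There is no real obstacle here. The only point that needs care is bookkeeping: the normalized coefficients \(\bar r^B_\chi\) and \(\bar p^B_\chi\) must not depend on \(\BR\) or \(\BP\). This holds because the \(\rwhat\) and \(\pnhat\) quantities are scale-free, with \(\BR\) and \(\BP\) already factored out. As a result, the inequality is genuinely linear in \(\rrho\) once \(\BP\) is fixed.
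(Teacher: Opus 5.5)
Your proof is correct and follows the same route as the paper. You rewrite the payoff comparison in gap form, substitute \(\BR=\rrho\BP\), divide by \(\BP>0\) to obtain \(\rrho\,\dRbar^B\ge \dPbar^B+(c_{\cstrat}-c_{\ncstrat})/\BP\), and then split on the sign of \(\dRbar^B\), which is exactly the paper's proof sketch and its appendix derivation. Your explicit checks that each step is reversible and that \(\Nc>0\) under the majority-honest regime are useful details the paper leaves implicit.
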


\begin{proof}[Proof sketch]
Equation~\eqref{eq:cost-adjusted-ic} is obtained by writing the
cost-adjusted payoff comparison in scale-free form.  If \(\dRbar^B>0\),
dividing by \(\dRbar^B\) preserves the inequality direction and yields a
lower bound on \(\rrho\).  If \(\dRbar^B<0\), the direction reverses and
the IC condition becomes an upper bound.  If \(\dRbar^B=0\), no ratio
threshold can be defined, and the remaining residual inequality is
\eqref{eq:ic-degenerate-case}.
\end{proof}

The important point is that IC is not always a lower-bound requirement.
When the conforming strategy has a lower normalized reward coefficient
than the non-conforming strategy, increasing \(\rrho\) can make the IC
comparison worse.  In that case, IC imposes an upper bound on the
reward--penalty ratio rather than a lower bound.

\subsection{Cost-Adjusted Individual Rationality}
\label{subsec:cost-adjusted-ir}

The IR condition requires the expected payoff from choosing
\(\cstrat\), after subtracting its effort cost, to be nonnegative:
\begin{equation}
    \BR\bar r^B_{\cstrat}
    -\BP\bar p^B_{\cstrat}
    -c_{\cstrat}
    \ge 0.
    \label{eq:ir-payoff-comparison}
\end{equation}
Substituting \(\BR=\rrho\BP\) and dividing by \(\BP>0\) gives
\begin{equation}
    \rrho\,\bar r^B_{\cstrat}
    \ge
    \bar p^B_{\cstrat}+\frac{c_{\cstrat}}{\BP}.
    \label{eq:cost-adjusted-ir}
\end{equation}

\begin{proposition}[Base-tier IR bound]
\label{prop:base-tier-ir-bound}
Fix a base tier \(B\in\{\TOne,\TTwo\}\), with \(\Nc>0\) and \(\BP>0\).  If
\(\bar r^B_{\cstrat}>0\), define
\begin{equation}
    \rrho^B_{\IR}
    :=
    \frac{\bar p^B_{\cstrat}+c_{\cstrat}/\BP}{\bar r^B_{\cstrat}}
    =
    \frac{\pnhat^B_{\cstrat}/\Nc+c_{\cstrat}/\BP}
         {\rwhat^B_{\cstrat}/\Nc}.
    \label{eq:cost-adjusted-ir-threshold}
\end{equation}
Then the cost-adjusted IR condition is equivalent to
\begin{equation}
    \rrho \ge \rrho^B_{\IR}.
    \label{eq:ir-lower-bound}
\end{equation}
If \(\bar r^B_{\cstrat}=0\), the IR condition must be evaluated directly
from \eqref{eq:ir-payoff-comparison}.
\end{proposition}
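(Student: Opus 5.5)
The plan is to mirror the argument for \Cref{prop:base-tier-ic-bound}. The IR condition is the single linear inequality \eqref{eq:ir-payoff-comparison} in the two pool sizes. After the substitution \(\BR=\rrho\BP\) it becomes linear in \(\rrho\), and only the sign of the coefficient of \(\rrho\) needs tracking. No case split on the sign is needed here, however. The coefficient \(\bar r^B_{\cstrat}=\rwhat^B_{\cstrat}/\Nc\) is a nonnegative combination of reward shares, so it cannot be negative, and the hypothesis \(\bar r^B_{\cstrat}>0\) therefore leaves only one case.

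First I establish the equivalence of \eqref{eq:ir-payoff-comparison} and \eqref{eq:cost-adjusted-ir}. I substitute \(\BR=\rrho\BP\) into \eqref{eq:ir-payoff-comparison}, obtaining
\[
\BP\bigl(\rrho\,\bar r^B_{\cstrat}-\bar p^B_{\cstrat}\bigr)-c_{\cstrat}\ge 0 .
\]
Dividing by \(\BP>0\) preserves the direction of the inequality and gives \(\rrho\,\bar r^B_{\cstrat}\ge \bar p^B_{\cstrat}+c_{\cstrat}/\BP\). Every step here is reversible because it is multiplication by a positive constant or the addition of a constant. The coefficients \(\bar r^B_{\cstrat}\) and \(\bar p^B_{\cstrat}\) are well defined because \(\Nc>0\) (by \eqref{eq:normalized-c-coefficients}). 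They depend only on the scenario decomposition and not on \(\BR\) or \(\BP\), since those pools were factored out. For this reason \(\rrho\) enters linearly.

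Next, assuming \(\bar r^B_{\cstrat}>0\), I divide \eqref{eq:cost-adjusted-ir} by \(\bar r^B_{\cstrat}\). The division preserves the inequality and yields \(\rrho\ge\rrho^B_{\IR}\) with \(\rrho^B_{\IR}\) as in \eqref{eq:cost-adjusted-ir-threshold}. This step is also reversible: multiplying back by \(\bar r^B_{\cstrat}>0\) and then by \(\BP>0\) recovers \eqref{eq:ir-payoff-comparison}, so the two conditions are equivalent and not merely one-way implications. The second expression for \(\rrho^B_{\IR}\) in \eqref{eq:cost-adjusted-ir-threshold} follows by inserting the definitions in \eqref{eq:normalized-c-coefficients}.

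In the degenerate case \(\bar r^B_{\cstrat}=0\), the left side of \eqref{eq:cost-adjusted-ir} is identically zero. The condition is then independent of \(\rrho\) and reduces to \(\bar p^B_{\cstrat}+c_{\cstrat}/\BP\le 0\), which is precisely what the statement means by evaluating it directly from \eqref{eq:ir-payoff-comparison}.

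I expect no substantive obstacle. The only point requiring care is bookkeeping: making sure the normalization by \(\Nc\) and the factoring of \(\BR,\BP\) match the appendix definitions of \(\rwhat^B_{\cstrat}\) and \(\pnhat^B_{\cstrat}\). This matters for Tier~2, where the positive entropy scaling factors multiply each summand; because those factors are positive and independent of \(\BR\) and \(\BP\), the linear structure in \(\rrho\) is preserved.
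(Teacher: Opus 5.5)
Your proposal is correct and follows the same route as the paper. You substitute \(\BR=\rrho\BP\) into \eqref{eq:ir-payoff-comparison}, divide by \(\BP>0\) to reach \eqref{eq:cost-adjusted-ir}, and then divide by \(\bar r^B_{\cstrat}>0\); the degenerate case \(\bar r^B_{\cstrat}=0\) reduces to the \(\rrho\)-independent check \(\bar p^B_{\cstrat}+c_{\cstrat}/\BP\le 0\), exactly as in \Cref{app:base-tier-ir-derivation}, and you additionally make explicit the reversibility that the paper leaves implicit.
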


\subsection{Feasible Ratio Regions}
\label{subsec:feasible-ratio-regions}

For each base tier \(B\in\{\TOne,\TTwo\}\), the feasible values of
\(\rrho\) are obtained by intersecting the IR region and the IC region.
When \(\dRbar^B>0\), both IC and IR give lower bounds, so the boundary
ratio is
\begin{equation}
    \rrho^{B,*}
    :=
    \max\{\rrho^B_{\IR},\rrho^B_{\IC}\}.
    \label{eq:rho-star-lower-bound-regime}
\end{equation}
In this lower-bound regime, any \(\rrho\ge \rrho^{B,*}\) satisfies both
conditions.

When \(\dRbar^B<0\), IC gives an upper bound while IR gives a lower
bound.  The feasible region is nonempty only if the two regions overlap:
\begin{equation}
    \rrho^B_{\IR} \le \rrho^B_{\IC}.
    \label{eq:upper-lower-overlap}
\end{equation}
If this inequality fails, then no positive reward--penalty ratio can
simultaneously satisfy the base-tier IC and IR constraints for that
parameter tuple.

If \(\dRbar^B=0\), IC imposes no ratio threshold.  The IC condition is
satisfied only when \eqref{eq:ic-degenerate-case} holds; otherwise, no
choice of \(\rrho\) can satisfy IC.  If the residual IC condition holds,
feasibility is determined by the IR condition alone.

\begin{table}[t]
\centering
\small
\caption{Representative feasibility cases for the reward--penalty ratio. Each tuple is \((\NA,u,\eps,p)\).}
\label{tab:representative-feasibility-cases}
\begin{tabular}{@{}llll@{}}
\toprule
Tuple \((\NA,u,\eps,p)\) & Case type & Baseline & Tuned \\
\midrule
\((5,2,0.15,0.50)\)  & Cost-efficient & feasible   & feasible \\
\((10,4,0.10,0.30)\) & Repairable     & infeasible & feasible \\
\((11,5,0.45,0.50)\) & Infeasible     & infeasible & infeasible \\
\bottomrule
\end{tabular}
\end{table}

\Cref{tab:representative-feasibility-cases} illustrates the three
qualitative regimes induced by the ratio-space conditions. In the
cost-efficient case, the reference setting already satisfies the sufficient IC
and IR conditions. In the repairable case, the reference setting fails, but the
derived feasible interval is nonempty, so an appropriate choice of \(\rrho\)
can restore feasibility. In the infeasible case, the IC condition imposes an
upper bound while the IR condition imposes a lower bound, and the two bounds do
not overlap. This last case is a construction-specific empty interval for the
proposed reward--penalty mechanism, not a problem-intrinsic impossibility
theorem.

\begin{remark}[Construction-specific scope]
\label{rem:rho-bounds-scope}
The bounds in this section give construction-specific sufficient
conditions for satisfying IC and IR under the proposed reward--penalty
mechanism.  They identify feasible ranges of \(\rrho\) induced by this
mechanism under the restricted strategy set \(\{\cstrat,\ncstrat\}\), the
majority-honest regime, and the stated signal model.  The broader question of whether the infeasible regions reflect a mechanism-class limitation is left as a future direction.
\end{remark}

\subsection{Tier Instantiations}
\label{subsec:tier-instantiations}

For Tier~1, the coefficients \(\rwhat^{\TOne}_{\chi}\) and
\(\pnhat^{\TOne}_{\chi}\) are obtained by summing the six payoff-distinct
scenario contributions under the majority-based equal-split reward and
penalty rule.  The expanded finite-sum expressions are bookkeeping-heavy
and are therefore collected in
\Cref{app:tier1-payoff-decomposition,app:aggregated-normalized-bounds}.
The main text uses only the normalized coefficients and gaps required by
\Cref{prop:base-tier-ic-bound,prop:base-tier-ir-bound}.

For Tier~2, the same ratio-space framework applies after replacing each
Tier~1 summand by its entropy-scaled counterpart.  Because the scaling
factors are positive, Tier~2 preserves the scenario partition and payoff
orientation of Tier~1, although the numerical values of
\(\dRbar^{\TTwo}\), \(\dPbar^{\TTwo}\), \(\rrho^{\TTwo}_{\IC}\), and
\(\rrho^{\TTwo}_{\IR}\) may differ.  The scaling construction is stated
in \Cref{app:tier2-entropy-scaling}.

Tier~3 is not represented by a common strategy-level threshold of the
form above.  Since stake-weighted redistribution can give different
payoff shares to agents choosing the same strategy, Tier~3 generally
induces agent-specific IC and IR bounds.  We therefore discuss Tier~3
separately in \Cref{sec:extensions} and give its full agent-level
construction in \Cref{app:tier3-stake-redistribution}.

%% file: sections/06_equilibrium_and_feasibility.tex
\section{Equilibrium and Feasibility}
\label{sec:equilibrium-feasibility}

This section states the all-conforming equilibrium implication induced by the
payoff rules. The result is conditional: it applies when the cost-adjusted payoff
difference for a unilateral deviation from \(\cstrat\) to \(\ncstrat\) is
nonnegative. By the majority-honest regime \(u\le\lfloor(\NA-1)/2\rfloor\), the
all-nonconforming profile, which has \(u=\NA\), is outside the regime considered
here.

\paragraph{Deviation gaps.}
Let \(U_a^B(s_a,s_{-a})\) denote agent \(a\)'s ex-ante expected payoff before
fixed strategy costs under base tier \(B\in\{\TOne,\TTwo\}\), with the
aggregation outcome recomputed under the compared profile. Define the
cost-adjusted base-tier deviation gap
\begin{equation}
\begin{aligned}
D_{\cstrat\to\ncstrat}^B
:={}&
\bigl(U_a^B(\cstrat,s_{-a}=\cstrat)-c_{\cstrat}\bigr)\\
&-
\bigl(U_a^B(\ncstrat,s_{-a}=\cstrat)-c_{\ncstrat}\bigr).
\end{aligned}
\label{eq:base-tier-deviation-payoff-difference}
\end{equation}
For Tier~1 and Tier~2 this quantity is common across agents under the
ex-ante symmetric base-tier payoff structure. For Tier~3 with selected base
tier \(B\), the corresponding agent-specific gap is
\begin{equation}
\begin{aligned}
D_{\cstrat\to\ncstrat,a}^{\TThree\mid B}
:={}&
\bigl(U_a^{\TThree\mid B}(\cstrat,s_{-a}=\cstrat)-c_{\cstrat}\bigr)\\
&-
\bigl(U_a^{\TThree\mid B}(\ncstrat,s_{-a}=\cstrat)-c_{\ncstrat}\bigr).
\end{aligned}
\label{eq:tier3-deviation-payoff-difference}
\end{equation}

\begin{theorem}[Conditional all-conforming equilibrium]
\label{thm:conditional-all-conforming-equilibrium}
Assume \(\NA\ge 2\). For a base tier \(B\in\{\TOne,\TTwo\}\), if
\[
    D_{\cstrat\to\ncstrat}^B\ge 0,
\]
then the all-conforming profile is a Nash equilibrium within
\(S_a=\{\cstrat,\ncstrat\}\) under \(B\). For Tier~3 with selected base tier
\(B\), if
\[
    D_{\cstrat\to\ncstrat,a}^{\TThree\mid B}\ge 0
    \quad\text{for all }a\in N,
\]
then the all-conforming profile is a Nash equilibrium within
\(S_a=\{\cstrat,\ncstrat\}\) under \(\TThree\mid B\).
\end{theorem}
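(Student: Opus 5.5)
The argument is essentially definitional. I will unpack the Nash equilibrium condition within the restricted strategy set and match it term by term against the deviation gaps in \eqref{eq:base-tier-deviation-payoff-difference} and \eqref{eq:tier3-deviation-payoff-difference}.

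\emph{Step 1: write out the equilibrium condition.} The all-conforming profile \(s^{\cstrat}=(\cstrat,\ldots,\cstrat)\) is a Nash equilibrium within \(S_a=\{\cstrat,\ncstrat\}\) if and only if, for every \(a\in N\),
\[
U_a(\cstrat,s_{-a}=\cstrat)-c_{\cstrat}\;\ge\;U_a(\ncstrat,s_{-a}=\cstrat)-c_{\ncstrat}.
\]
Here payoffs are the ex-ante expected payoffs from \eqref{eq:expected-utility} with external gains set to zero. Because each player has only two strategies, the unique unilateral deviation is \(\cstrat\to\ncstrat\). No mixed or outside deviations need to be checked, since the statement is restricted to pure deviations in \(S_a\).

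\emph{Step 2: check that the deviation profile is well defined.} After agent \(a\) deviates, the profile has \(u=1\). Since \(\NA\ge 2\) we have \(\NA-1\ge 1\), so \(\Nc\ge 1\) and the conforming group is nonempty. The aggregation outcome, including ties when \(\NA=2\), is recomputed from the full report profile, exactly as the definition of \(U_a^B\) prescribes. Nothing in the theorem requires \(u=1\) to lie in the majority-honest regime. The gaps are defined directly from the payoff functions, not from the normalized coefficients of \Cref{prop:base-tier-ic-bound}, so the \(u>0\) normalization issue and the bound \eqref{eq:majority-honest} do not arise.

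\emph{Step 3: base tiers.} Under \(B\in\{\TOne,\TTwo\}\), the displayed inequality for agent \(a\) is literally \(D^B_{\cstrat\to\ncstrat}\ge 0\). This quantity is the same for every agent for two reasons. First, agents are exchangeable in the signal model: signals are i.i.d.\ given \(\G\), and costs are common. Second, the payoff rule \eqref{eq:tier1-realized-payoff}, and its positively rescaled Tier~2 version, depends on an agent only through whether its report is on the winning or losing side and on the side sizes. Permuting agents therefore leaves ex-ante payoffs invariant, so the single hypothesis covers all \(a\in N\).

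\emph{Step 4: Tier~3.} Stake shares break the symmetry, so each agent's no-deviation inequality is exactly \(D^{\TThree\mid B}_{\cstrat\to\ncstrat,a}\ge 0\). The hypothesis assumes this for all \(a\), which closes the argument.

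The only point needing care, and the main potential obstacle, is the symmetry claim in Step 3. I would justify it by a permutation argument on the realized payoffs, noting that the Tier~2 entropy factor depends only on the report profile counts and is therefore symmetric. If that argument resisted, I would simply read the hypothesis as holding for each agent, mirroring the Tier~3 case, which yields the conclusion without any symmetry assumption.
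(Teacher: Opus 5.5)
Your proposal is correct and follows the same definitional route as the paper's proof sketch: at the all-conforming profile the only unilateral deviation within \(\{\cstrat,\ncstrat\}\) is \(\cstrat\to\ncstrat\), and the hypothesis \(D^B_{\cstrat\to\ncstrat}\ge 0\) (respectively \(D^{\TThree\mid B}_{\cstrat\to\ncstrat,a}\ge 0\) for every \(a\in N\)) is exactly the no-profitable-deviation condition. Your extra checks do not change the argument; they make explicit two points the paper only asserts or leaves implicit: the exchangeability argument that the base-tier gap is common across agents, and the observation that the deviation profile with \(u=1\) need not lie in the majority-honest regime.
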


\begin{proof}[Proof sketch]
At the all-conforming profile, the only unilateral deviation in the restricted
strategy set is from \(\cstrat\) to \(\ncstrat\). The displayed inequality says
that this deviation is not profitable after fixed strategy costs are subtracted.
For Tier~3 the same argument is applied agent by agent, because
stake-weighted redistribution can make the deviation gap agent-specific.
\end{proof}

The theorem is separate from participation: IR additionally requires
nonnegative cost-adjusted payoff from \(\cstrat\). It also does not imply
dominant-strategy truthfulness, equilibrium uniqueness, or a payoff-based
exclusion of the all-nonconforming profile.

\paragraph{Construction-specific infeasibility.}
The ratio-space bounds in \Cref{sec:rho-bounds} also identify parameter tuples
for which the proposed construction has no feasible reward--penalty ratio. In
the lower-bound regime \(\dRbar^B>0\), both IC and IR impose lower bounds on
\(\rrho\). In the upper-bound regime \(\dRbar^B<0\), IC imposes an upper bound
while IR imposes a lower bound.

\begin{corollary}[Construction-specific infeasibility]
\label{cor:construction-specific-infeasibility}
Fix a base tier \(B\in\{\TOne,\TTwo\}\). Suppose \(\dRbar^B<0\), so that IC
imposes an upper bound, and suppose the thresholds in
\Cref{prop:base-tier-ic-bound,prop:base-tier-ir-bound} are well-defined. If
\[
    \rrho_{\IR}^B>\rrho_{\IC}^B,
\]
then no positive reward--penalty ratio satisfies both the base-tier IC and IR
sufficient conditions for that parameter tuple under the proposed construction.
\end{corollary}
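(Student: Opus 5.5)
The plan is to combine the two equivalences already established in \Cref{prop:base-tier-ic-bound,prop:base-tier-ir-bound} and observe that the resulting constraint sets on \(\rrho\) are disjoint.

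\textbf{Step 1: IC gives an upper bound.} Fix \(B\in\{\TOne,\TTwo\}\) and the parameter tuple, with \(u>0\), \(\Nc>0\) and \(\BP>0\). Well-definedness of the thresholds means \(\dRbar^B\neq 0\), which here holds because \(\dRbar^B<0\), and \(\bar r^B_{\cstrat}>0\). By \Cref{prop:base-tier-ic-bound} in the branch \(\dRbar^B<0\), the cost-adjusted IC condition \eqref{eq:cost-adjusted-ic} is equivalent to \(\rrho\le\rrho^B_{\IC}\). I will re-verify this direction directly: dividing \(\rrho\,\dRbar^B\ge \dPbar^B+(c_{\cstrat}-c_{\ncstrat})/\BP\) by the negative number \(\dRbar^B\) reverses the inequality.

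\textbf{Step 2: IR gives a lower bound.} By \Cref{prop:base-tier-ir-bound}, the cost-adjusted IR condition \eqref{eq:cost-adjusted-ir} is equivalent to \(\rrho\ge\rrho^B_{\IR}\). This uses \(\bar r^B_{\cstrat}>0\), so dividing preserves the direction.

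\textbf{Step 3: the two sets do not meet.} The set of ratios satisfying both IC and IR is
\[
\{\rrho:\rrho\ge\rrho^B_{\IR}\}\cap\{\rrho:\rrho\le\rrho^B_{\IC}\}.
\]
Suppose some \(\rrho\) lay in this intersection. Then \(\rrho^B_{\IR}\le\rrho\le\rrho^B_{\IC}\), which contradicts the hypothesis \(\rrho^B_{\IR}>\rrho^B_{\IC}\). So no real \(\rrho\) satisfies both conditions, and in particular no positive ratio \(\rrho=\BR/\BP\) does. This is exactly the failure of the overlap condition \eqref{eq:upper-lower-overlap}.

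\textbf{Possible obstacle: fixing the quantities before varying \(\rrho\).} The argument is essentially immediate, so the only delicate point is scope. The coefficients \(\bar r^B_\chi\) and \(\bar p^B_\chi\) depend only on \((\NA,u,\eps,p)\) and the tier, and not on \(\BR\) or \(\BP\), because \(\BR\) and \(\BP\) were factored out. The thresholds also contain \(c_{\cstrat}/\BP\) and \((c_{\cstrat}-c_{\ncstrat})/\BP\). The statement must therefore be read with \(\BP\), the costs, and the tuple held fixed while \(\rrho\) varies alone, that is, with \(\BR\) chosen as \(\rrho\BP\). I will make this explicit so the thresholds are constants during the argument.

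I will also note that the conclusion concerns only the sufficient conditions of this construction, in line with \Cref{rem:rho-bounds-scope}. It is not an impossibility result for the problem in general.
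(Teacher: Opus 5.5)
Your proposal is correct and follows the same argument as the paper's proof sketch. For $\dRbar^B<0$, IC becomes the upper bound $\rrho\le\rrho^B_{\IC}$ and, for $\bar r^B_{\cstrat}>0$, IR becomes the lower bound $\rrho\ge\rrho^B_{\IR}$, so $\rrho^B_{\IR}>\rrho^B_{\IC}$ leaves the intersection empty; your explicit note that $\BP$, the costs, and the tuple are held fixed while only $\rrho$ varies is a useful clarification that the paper leaves implicit.
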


\begin{proof}[Proof sketch]
When \(\dRbar^B<0\), IC gives \(\rrho\le \rrho_{\IC}^B\), while IR gives
\(\rrho\ge \rrho_{\IR}^B\). If the lower bound exceeds the upper bound, the two
regions do not overlap.
\end{proof}

The role of this section is to connect the ratio-space bounds to an equilibrium
statement. \Cref{sec:rho-bounds} identifies values of \(\rrho\) that satisfy
the sufficient IC and IR conditions for the proposed construction. \Cref{thm:conditional-all-conforming-equilibrium} then uses the same cost-adjusted payoff comparison to show that, when a
unilateral deviation from \(\cstrat\) to \(\ncstrat\) is not profitable, the
all-conforming profile is a Nash equilibrium within the restricted strategy set.
Conversely, \Cref{cor:construction-specific-infeasibility} records the case in
which tuning \(\rrho\) cannot repair the construction because the IC upper bound
and IR lower bound form an empty interval.

%% file: sections/07_extensions_tier2_tier3.tex
\section{Entropy Scaling and Stake Redistribution}
\label{sec:extensions}

This section summarizes the analytical roles of the two extensions. The full
definitions are deferred to \Cref{app:tier2-entropy-scaling,app:tier3-stake-redistribution}.

\paragraph{Tier~2.}
Tier~2 is a positive scenario-index scaling of the Tier~1 reward and penalty
summands. It preserves the aggregation rule, payoff-side assignment, scenario
partition, and restricted strategy set. Because every scaling factor is
positive, a reward-side summand remains a reward-side summand and a penalty-side
summand remains a penalty-side summand. Consequently, the same ratio-space IC
and IR structure from \Cref{sec:rho-bounds} applies after replacing the Tier~1
scale-free coefficients by their entropy-scaled Tier~2 counterparts. The
feasible numerical range of \(\rrho\) can nevertheless change because the
summands are reweighted before aggregation.

\paragraph{Tier~3.}
Tier~3 applies stake-weighted redistribution on top of a selected base tier
\(B\in\{\TOne,\TTwo\}\). It keeps the aggregation outcome fixed and redistributes
the selected base-tier reward and penalty pools according to stake shares within
the reward-receiving and penalty-bearing sides. This pool-preservation property
does not preserve individual incentives: two agents choosing the same strategy
can receive different payoff shares when their stakes differ. Therefore Tier~3
generally induces agent-specific IC and IR conditions, and feasibility must be
checked agent by agent.

Both extensions operate within the same restricted strategy set and ratio-space
structure. Tier~2 changes coefficient values within the same base-tier
ratio-space structure; Tier~3 changes the normalization level from common
strategy-level coefficients to agent-specific stake-share coefficients.

%% file: sections/08_validation_and_sensitivity.tex
\section{Numerical Checks and Sensitivity}
\label{sec:validation-sensitivity}

The numerical experiments in this paper play a supporting role. They check the
closed-form Tier~1 quantities and illustrate the behavior of the ratio-space
sufficient conditions; they are not the main theoretical contribution.

For the consistency check, the simulation draws the latent label, generates
conditionally independent reports for agents choosing \(\cstrat\), assigns the
deterministic prior-informed report \(V_{\ncstrat}(p)\) to agents choosing
\(\ncstrat\), and applies the majority aggregation rule to the full report
profile. The resulting per-agent reward and penalty estimates are compared with
the corresponding closed-form Tier~1 quantities used in \Cref{sec:rho-bounds}.
The full simulation settings and summary statistics are reported in
\Cref{app:mc-details}. Across the representative configurations, the
closed-form and simulated quantities agree at the level of the underlying
per-agent reward and penalty components. Larger residuals in
\(\rrho_{\IC}^{\TOne}\) occur when \(\dRbar^{\TOne}\) is close to zero, so the
supporting quantities and the IC-bound direction are more informative than the
ratio alone.

The sensitivity analysis fixes \(p=0.3\) and uses Tier~1 closed-form quantities
to vary the signal error rate and the number of agents choosing \(\ncstrat\).
In the lower-bound regime, the relevant boundary is
\(\rrho^*=\max\{\rrho_{\IR}^{\TOne},\rrho_{\IC}^{\TOne}\}\).

\begin{figure*}[t]
    \centering
    \includegraphics[width=0.82\textwidth]{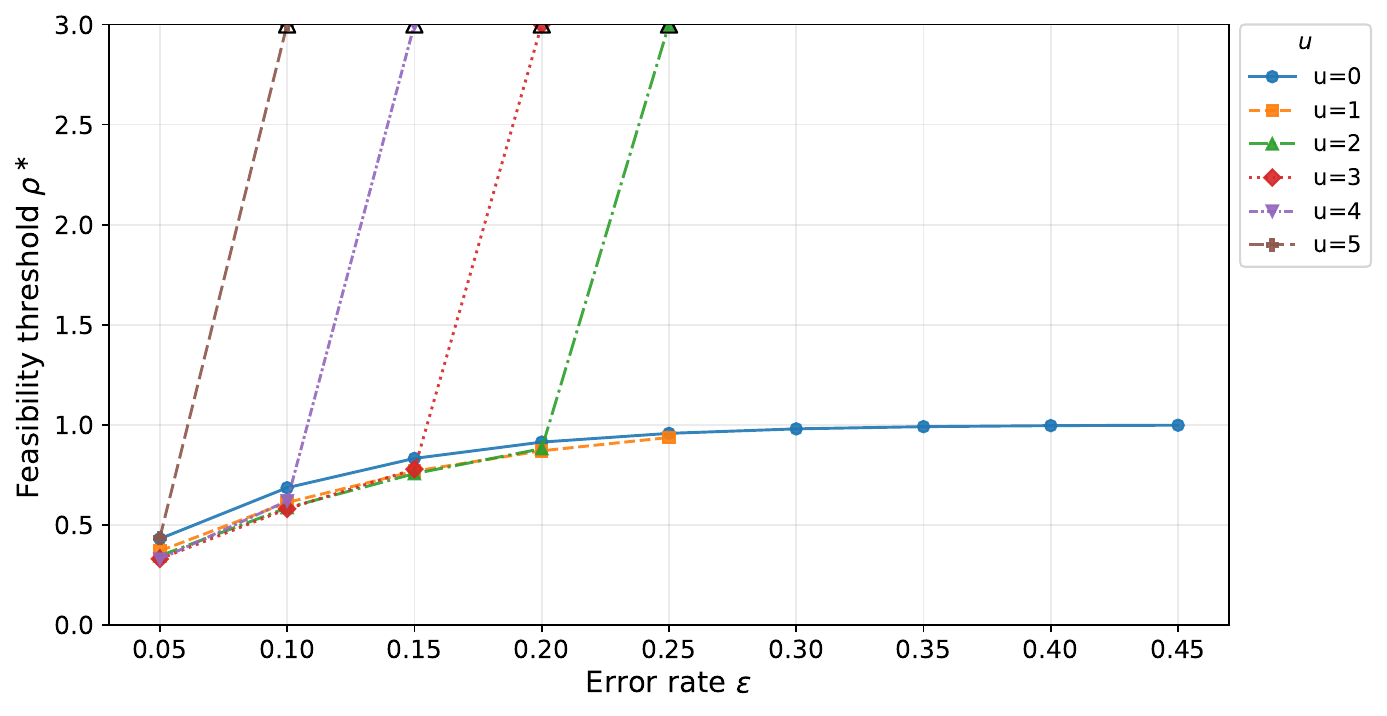}
    \caption{Sensitivity illustration for the Tier~1 lower-bound regime, with
    \(\NA=11\) and \(p=0.3\). The plot shows the feasible threshold
    \(\rrho^*\) as the signal error rate \(\eps\) varies for different values
    of \(u\). Curve termination indicates that the corresponding parameter
    tuple has left the lower-bound regime, rather than missing data.}
    \label{fig:sensitivity-epsilon-u}
\end{figure*}

\Cref{fig:sensitivity-epsilon-u} illustrates the main pattern: higher signal
noise and more non-conforming agents increase the required reward--penalty ratio
in the lower-bound regime. As \(\eps\) and \(u\) increase, the reward-side gap
can approach zero, producing the same small-denominator effect observed in the
consistency check. Additional decomposition plots are collected in
\Cref{app:extra-sensitivity-figures}.

%% file: sections/09_conclusion_and_scope.tex
\section{Conclusion and Scope}
\label{sec:conclusion}

This paper studied a tunable reward--penalty construction for binary aggregation
without ground-truth verification. The analysis derived construction-specific
sufficient IC and IR conditions in reward--penalty ratio space. For the base
tiers, the sign of the normalized reward gap determines whether IC gives a lower
or upper bound on the ratio, while IR gives a lower-bound requirement when the
conforming reward coefficient is positive. These conditions imply a conditional
all-conforming Nash equilibrium within the restricted strategy set and identify
parameter regimes in which the proposed construction admits no feasible ratio.

The scope is deliberately limited to construction-specific sufficient conditions, under the modeled strategy set, majority-honest regime, fixed effort costs, and signal structure. A natural future direction is to ask whether the construction-specific infeasible regions observed here reflect a broader feasibility--informativeness tradeoff for a well-defined class of action-based mechanisms without verification.

%% file: sections/02_related_work.tex
\section{Related Work}
\label{sec:related-work}

This paper is related to label aggregation, information elicitation without verification, and majority-based incentive mechanisms. We focus on how these lines differ from the ratio-space incentive analysis developed here.

\paragraph{Label aggregation.}
Classical label aggregation methods, including Dawid--Skene and later probabilistic models of worker ability, task difficulty, and item characteristics, estimate latent labels and worker reliability from noisy reports~\cite{dawid1979maximum,whitehill2009whose,zhou2012learning,kim2012bayesian,venanzi2014community,raykar2010learning}. These methods address the statistical problem of inferring labels from reports. In contrast, this paper studies incentive calibration for an action-based reward--penalty mechanism when direct verification is unavailable.

\paragraph{Information elicitation without verification.}
Output agreement, peer prediction, Bayesian Truth Serum, and related mechanisms elicit information by using agreement, report correlations, or additional belief reports~\cite{von2004labeling,waggoner2014output,prelec2004bayesian,miller2005eliciting,kong2019information,witkowski2012robust,radanovic2013robust,shnayder2016informed,schoenebeck2020two,liu2023surrogate}. This literature studies incentive properties under assumptions about priors, signal structures, belief reports, and equilibrium selection. The present paper does not introduce a belief-elicitation or proper-scoring mechanism. It instead analyzes construction-specific sufficient IC and IR conditions for a simpler action-based reward--penalty construction.

\paragraph{Majority-based aggregation and incentives.}
Majority voting is widely used because of its simplicity and low computational cost~\cite{mercier2019majority,xu2018reward,huang2019crowdsourcing}. However, majority aggregation alone does not calibrate incentives for costly informative reporting, and agreement-based rewards can admit low-information behavior~\cite{waggoner2014output,faltings2022game,huang2020online,huang2021strategic,huang2020using}. Our mechanism keeps the transparency of majority-based aggregation while making the reward and penalty magnitudes explicit design variables. The contribution is a ratio-space analysis of sufficient IC and IR conditions, rather than a new statistical label estimator or a belief-elicitation mechanism.

%% file: sections/11_acknowledgments.tex
This study was funded by Ripple Labs and the Natural Sciences and Engineering
Research Council of Canada (NSERC). We also thank Dr.~Seyed Majid Zahedi,
Dr.~Kate Larson, and Dr.~Mahesh Tripunitara at the University of Waterloo, and
Dr.~Chen Feng at the University of British Columbia, for their valuable insights
on earlier versions of this work.

%% file: appendix/appA_notation_and_scenarios.tex
\section{Notation and Scenario Catalog}
\label{app:notation-scenarios}
\label{app:scenarios}

This appendix collects the notation and scenario catalog used in the payoff
analysis.  The purpose is only to support the finite-sum derivations in the
subsequent appendices.  The catalog distinguishes the latent label, the
conforming-majority state, and the final aggregation outcome.  In particular,
\(\M\) is an analytical majority state among agents choosing \(\cstrat\), while
\(\Y\) is the final aggregation outcome over the full report profile.

\subsection{Core notation}
\label{app:core-notation}

\begin{table*}[t]
\centering
\small
\renewcommand{\arraystretch}{1.18}
\caption{Core notation used in the scenario-wise payoff decomposition.}
\label{tab:app-core-notation}
\begin{tabular}{p{0.20\textwidth}p{0.74\textwidth}}
\toprule
\textbf{Symbol} & \textbf{Meaning} \\
\midrule
\(\G\in\{t,f\}\) & Latent binary label, where \(t\) denotes true and \(f\) denotes false. The mechanism does not observe \(\G\). \\
\(p\) & Prior probability \(\Pr(\G=t)=p\). \\
\(X_a\in\{t,f\}\) & Private signal observed by agent \(a\) under the conforming strategy \(\cstrat\). \\
\(\eps\in(0,1/2)\) & Signal error probability, with \(\Pr(X_a\neq \G)=\eps\) and \(\Pr(X_a=\G)=1-\eps\). \\
\(\cstrat\) & Conforming strategy: the agent reports the informative private signal. \\
\(\ncstrat\) & Non-conforming strategy: the agent ignores the private signal and follows a deterministic prior-informed report rule. \\
\(V_{\ncstrat}(p)\) & Deterministic report used under \(\ncstrat\): report \(t\) if \(p>1/2\), and report \(f\) otherwise. \\
\(\U\in\{t,f\}\) & Common report submitted by agents choosing \(\ncstrat\), defined as \(\U:=V_{\ncstrat}(p)\). \\
\(\NA\) & Total number of active agents. \\
\(u\) & Number of agents choosing the non-conforming strategy \(\ncstrat\). \\
\(\Nc\) & Number of agents choosing the conforming strategy \(\cstrat\), so \(\Nc:=\NA-u\). \\
\(i,j\) & Numbers of conforming agents on the conforming-majority and conforming-minority sides, respectively, with \(i+j=\Nc\). \\
\(\M\in\{t,f,\tau\}\) & Majority state among conforming agents only. The value \(\tau\) denotes no strict majority among conforming agents. \\
\(\Y\in\{t,f,\tau\}\) & Final aggregation outcome computed from the full report profile. If \(\Y=\tau\), the mechanism assigns no reward or penalty. \\
\(\BR,\BP\) & Base reward pool and base penalty pool. \\
\(\rrho\) & Reward--penalty ratio \(\rrho:=\BR/\BP\). \\
\(C_a(s),G_a(s)\) & Fixed strategy cost and external gain for agent \(a\) under strategy \(s\). In the baseline analysis, external gains are set to zero. \\
\bottomrule
\end{tabular}
\end{table*}

For compactness, the derivations use the following binomial mass function.  For
\(q\in[0,1]\) and integer \(n\), define
\begin{equation}
B_{\Nc}(q,n)
:=
\begin{cases}
\binom{\Nc}{n}q^n(1-q)^{\Nc-n}, & n\in\{0,1,\ldots,\Nc\},\\[4pt]
0, & \text{otherwise}.
\end{cases}
\label{eq:app-binomial-mass}
\end{equation}
Thus \(B_{\Nc}(1-\eps,n)\) is the probability that exactly \(n\) conforming
agents receive correct private signals, and \(B_{\Nc}(\eps,n)\) is the
probability that exactly \(n\) conforming agents receive incorrect private
signals.

\subsection{Scenario catalog}
\label{app:scenario-catalog}

Each logical case is represented by a tuple \((\G,\M,\U)\), where \(\G\) is the
latent label, \(\M\) is the majority state among conforming agents, and \(\U\) is
the common non-conforming report.  Since the labels \(t\) and \(f\) are symmetric
for payoff purposes, the twelve logical cases collapse into the six
payoff-distinct scenarios in \Cref{tab:app-scenario-catalog}.

\begin{table*}[t]
\centering
\small
\renewcommand{\arraystretch}{1.18}
\caption{Reduction from twelve logical cases to six payoff-distinct scenarios. The symbol \(\tau\) denotes no strict majority among conforming agents.}
\label{tab:app-scenario-catalog}
\begin{tabular}{p{0.11\textwidth}p{0.24\textwidth}p{0.56\textwidth}}
\toprule
\textbf{Scenario} & \textbf{Logical cases \((\G,\M,\U)\)} & \textbf{Interpretation} \\
\midrule
\(E_1\) & \((t,t,t)\), \((f,f,f)\) & The conforming-majority state matches the latent label, and the non-conforming report agrees with that majority state. \\
\(E_2\) & \((t,f,f)\), \((f,t,t)\) & The conforming-majority state differs from the latent label, and the non-conforming report agrees with that majority state. \\
\(E_3\) & \((t,t,f)\), \((f,f,t)\) & The conforming-majority state matches the latent label, and the non-conforming report disagrees with that majority state. \\
\(E_4\) & \((t,f,t)\), \((f,t,f)\) & The conforming-majority state differs from the latent label, and the non-conforming report disagrees with that majority state. \\
\(E_5\) & \((t,\tau,t)\), \((f,\tau,f)\) & There is no strict majority among conforming agents, and the non-conforming report matches the latent label. \\
\(E_6\) & \((t,\tau,f)\), \((f,\tau,t)\) & There is no strict majority among conforming agents, and the non-conforming report differs from the latent label. \\
\bottomrule
\end{tabular}
\end{table*}

The scenario catalog is used only for analytical decomposition.  It does not
change the mechanism's information structure: the mechanism observes submitted
reports and computes \(\Y\), but it does not observe the latent label \(\G\) or
whether a report was generated by \(\cstrat\) or \(\ncstrat\).

%% file: appendix/appB_tier1_payoff_decomposition.tex
\section{Tier~1 Scenario-Wise Payoff Decomposition}
\label{app:tier1-payoff-decomposition}
\label{app:tier1-payoffs}

This appendix gives the finite-sum payoff terms for Tier~1.  It supports the
ratio-space derivations in the main text by making explicit how the six
scenarios in \Cref{app:scenarios} contribute to the reward and penalty
quantities.  The quantities below are aggregate strategy-wise quantities before
per-agent normalization.  They already include the relevant scenario probability
factors and binomial masses.

\subsection{Auxiliary notation and conventions}
\label{app:tier1-auxiliary-notation}

Let
\begin{align}
q^+
&:=
\Pr(\G=V_{\ncstrat}(p))
=
\begin{cases}
p, & p>1/2,\\
1-p, & p\le 1/2,
\end{cases}
\label{eq:app-q-plus}\\
q^-&:=1-q^+.
\label{eq:app-q-minus}
\end{align}

Thus \(q^+\) is the probability that the deterministic non-conforming report
matches the latent label, and \(q^-\) is the probability that it does not.
Using the binomial mass function in \eqref{eq:app-binomial-mass}, define
\begin{equation}
B^+(n):=B_{\Nc}(1-\eps,n),
\qquad
B^-(n):=B_{\Nc}(\eps,n).
\label{eq:app-b-plus-minus}
\end{equation}
The term \(B^+(n)\) is used for counts of conforming reports that match the
latent label, while \(B^-(n)\) is used for counts of conforming reports that
differ from the latent label.

For each scenario \(E\) and strategy \(\chi\in\{\cstrat,\ncstrat\}\), let
\(rw^{\mathrm{T1}}_{E,\chi}\) and \(pn^{\mathrm{T1}}_{E,\chi}\) denote the
Tier~1 reward and penalty quantities assigned to the agents choosing strategy
\(\chi\).  Empty sums are interpreted as zero.  When the final aggregation
outcome is a tie, \(\Y=\tau\), Tier~1 assigns no reward or penalty.

The scenario probability factors are
\begin{equation}
q_{E_1}=q_{E_4}=q_{E_5}=q^+,
\qquad
q_{E_2}=q_{E_3}=q_{E_6}=q^-.
\label{eq:app-scenario-probabilities}
\end{equation}

\subsection{Scenarios in which the non-conforming report agrees with the conforming majority}
\label{app:tier1-e1-e2}

In \(E_1\) and \(E_2\), the non-conforming report agrees with the majority
state among conforming agents.  Thus the reward side contains both the
conforming-majority reports and the \(u\) non-conforming reports.  The penalty
side contains only the conforming-minority reports.

For \(E_1\), the conforming-majority reports match the latent label.  Hence the
majority-side counts use \(B^+\), the minority-side counts use \(B^-\), and
\begin{align}
rw^{\mathrm{T1}}_{E_1,\cstrat}
&=
q^+\BR
\sum_{i=\lfloor \Nc/2\rfloor+1}^{\Nc}
B^+(i)\frac{i}{i+u},
\\
rw^{\mathrm{T1}}_{E_1,\ncstrat}
&=
q^+\BR
\sum_{i=\lfloor \Nc/2\rfloor+1}^{\Nc}
B^+(i)\frac{u}{i+u},
\\
pn^{\mathrm{T1}}_{E_1,\cstrat}
&=
q^+\BP
\sum_{j=1}^{\lfloor (\Nc-1)/2\rfloor}
B^-(j),
\\
pn^{\mathrm{T1}}_{E_1,\ncstrat}
&=0.
\label{eq:app-e1-terms}
\end{align}

For \(E_2\), the conforming-majority reports differ from the latent label.
Hence the majority-side counts use \(B^-\), the minority-side counts use
\(B^+\), and
\begin{align}
rw^{\mathrm{T1}}_{E_2,\cstrat}
&=
q^-\BR
\sum_{i=\lfloor \Nc/2\rfloor+1}^{\Nc}
B^-(i)\frac{i}{i+u},
\\
rw^{\mathrm{T1}}_{E_2,\ncstrat}
&=
q^-\BR
\sum_{i=\lfloor \Nc/2\rfloor+1}^{\Nc}
B^-(i)\frac{u}{i+u},
\\
pn^{\mathrm{T1}}_{E_2,\cstrat}
&=
q^-\BP
\sum_{j=1}^{\lfloor (\Nc-1)/2\rfloor}
B^+(j),
\\
pn^{\mathrm{T1}}_{E_2,\ncstrat}
&=0.
\label{eq:app-e2-terms}
\end{align}

\subsection{Scenarios in which the non-conforming report opposes the conforming majority}
\label{app:tier1-e3-e4}

In \(E_3\) and \(E_4\), the non-conforming report disagrees with the majority
state among conforming agents.  The final aggregation outcome depends on
whether the \(u\) non-conforming reports overturn the conforming majority.  For
compactness, define
\begin{equation}
\begin{aligned}
(q_3,H_3,L_3)&:=(q^-,B^+,B^-),\\
(q_4,H_4,L_4)&:=(q^+,B^-,B^+).
\end{aligned}
\label{eq:app-e3-e4-compact-notation}
\end{equation}
Here \(H_s\) is the binomial mass for the conforming-majority side in scenario
\(E_s\), and \(L_s\) is the binomial mass for the conforming-minority side.
Scenario \(E_3\) has a ground-truth-matching conforming majority; scenario
\(E_4\) has a ground-truth-mismatching conforming majority.

For each \(s\in\{3,4\}\), split \(E_s\) into the following subcases:
\(E_{sa}\) is the no-overturn case \(i>j+u\), \(E_{sb}\) is the tie case
\(i=j+u\), and \(E_{sc}\) is the overturn case \(i<j+u\).

\paragraph{No overturn.}
In \(E_{sa}\), the final outcome is the conforming-majority state.  The
conforming-majority side receives the reward, while the conforming-minority
side and the non-conforming agents share the penalty side.  Thus
\begin{align}
rw^{\mathrm{T1}}_{E_{sa},\cstrat}
&=
q_s\BR
\sum_{i=\lfloor(\Nc+u)/2\rfloor+1}^{\Nc}
H_s(i),
\\
rw^{\mathrm{T1}}_{E_{sa},\ncstrat}
&=0,
\\
pn^{\mathrm{T1}}_{E_{sa},\cstrat}
&=
q_s\BP
\sum_{j=1}^{\lceil(\Nc-u)/2\rceil-1}
L_s(j)\frac{j}{j+u},
\\
pn^{\mathrm{T1}}_{E_{sa},\ncstrat}
&=
q_s\BP
\sum_{j=0}^{\lceil(\Nc-u)/2\rceil-1}
L_s(j)\frac{u}{j+u}.
\label{eq:app-esa-terms}
\end{align}

\paragraph{Tie.}
In \(E_{sb}\), the final outcome is unresolved.  For each
\(\chi\in\{\cstrat,\ncstrat\}\),
\begin{equation}
rw^{\mathrm{T1}}_{E_{sb},\chi}=0,
\qquad
pn^{\mathrm{T1}}_{E_{sb},\chi}=0.
\label{eq:app-esb-zero}
\end{equation}

\paragraph{Overturn.}
In \(E_{sc}\), the final outcome is the non-conforming report.  The
conforming-minority side and the non-conforming agents share the reward side,
while the conforming-majority side incurs the penalty.  Thus
\begin{align}
rw^{\mathrm{T1}}_{E_{sc},\cstrat}
&=
q_s\BR
\sum_{j=\lfloor(\Nc-u)/2\rfloor+1}^{\lfloor(\Nc-1)/2\rfloor}
L_s(j)\frac{j}{j+u},
\\
rw^{\mathrm{T1}}_{E_{sc},\ncstrat}
&=
q_s\BR
\sum_{j=\lfloor(\Nc-u)/2\rfloor+1}^{\lfloor(\Nc-1)/2\rfloor}
L_s(j)\frac{u}{j+u},
\\
pn^{\mathrm{T1}}_{E_{sc},\cstrat}
&=
q_s\BP
\sum_{i=\lfloor \Nc/2\rfloor+1}^{\lceil(\Nc+u)/2\rceil-1}
H_s(i),
\\
pn^{\mathrm{T1}}_{E_{sc},\ncstrat}
&=0.
\label{eq:app-esc-terms}
\end{align}

For each \(s\in\{3,4\}\) and \(\chi\in\{\cstrat,\ncstrat\}\), the full scenario
quantities are
\begin{align}
rw^{\mathrm{T1}}_{E_s,\chi}
&=
rw^{\mathrm{T1}}_{E_{sa},\chi}
+
rw^{\mathrm{T1}}_{E_{sb},\chi}
+
rw^{\mathrm{T1}}_{E_{sc},\chi},
\label{eq:app-es-total-reward}
\\
pn^{\mathrm{T1}}_{E_s,\chi}
&=
pn^{\mathrm{T1}}_{E_{sa},\chi}
+
pn^{\mathrm{T1}}_{E_{sb},\chi}
+
pn^{\mathrm{T1}}_{E_{sc},\chi}.
\label{eq:app-es-total-penalty}
\end{align}

\subsection{Tie scenarios among conforming agents}
\label{app:tier1-e5-e6}

Scenarios \(E_5\) and \(E_6\) occur only when the conforming reports are tied.
Let \(i^*=j^*=\Nc/2\).  These scenarios contribute only when \(\Nc\) is even
and \(u>0\); otherwise all four quantities for each tie scenario are zero.

Define
\begin{equation}
\begin{aligned}
(q_5,R_5,P_5)&:=(q^+,B^+(i^*),B^-(j^*)),\\
(q_6,R_6,P_6)&:=(q^-,B^-(i^*),B^+(j^*)).
\end{aligned}
\label{eq:app-e5-e6-compact-notation}
\end{equation}
Here \(R_\ell\) is the binomial mass associated with the reward side in
scenario \(E_\ell\), and \(P_\ell\) is the binomial mass associated with the
penalty side.  For each \(\ell\in\{5,6\}\), when \(\Nc\) is even and \(u>0\),
\begin{align}
rw^{\mathrm{T1}}_{E_\ell,\cstrat}
&=
q_\ell\BR R_\ell\frac{i^*}{i^*+u},
\\
rw^{\mathrm{T1}}_{E_\ell,\ncstrat}
&=
q_\ell\BR R_\ell\frac{u}{i^*+u},
\\
pn^{\mathrm{T1}}_{E_\ell,\cstrat}
&=
q_\ell\BP P_\ell,
\\
pn^{\mathrm{T1}}_{E_\ell,\ncstrat}
&=0.
\label{eq:app-e56-terms}
\end{align}

\subsection{Aggregating Tier~1 quantities}
\label{app:tier1-aggregate-scenario-sums}

The Tier~1 aggregate reward and penalty quantities for each strategy are obtained
by summing over the six payoff-distinct scenarios:
\begin{equation}
\begin{aligned}
rw^{\mathrm{T1}}_{\chi}
&:=
\sum_{k=1}^{6} rw^{\mathrm{T1}}_{E_k,\chi},\\
pn^{\mathrm{T1}}_{\chi}
&:=
\sum_{k=1}^{6} pn^{\mathrm{T1}}_{E_k,\chi},
\qquad
\chi\in\{\cstrat,\ncstrat\}.
\end{aligned}
\label{eq:app-tier1-aggregate-sums}
\end{equation}
These aggregate quantities are the inputs to the normalized reward and penalty
coefficients used in the IC and IR bounds.  The normalization step is handled in
the subsequent appendix; the present appendix only records the scenario-wise
Tier~1 payoff decomposition.

%% file: appendix/appC_aggregated_and_normalized_bounds.tex
\section{Aggregated and Normalized Bounds}
\label{app:aggregated-normalized-bounds}

This appendix converts the scenario-wise Tier~1 payoff decomposition in
\Cref{app:tier1-payoff-decomposition} into the scale-free aggregate and
per-agent normalized coefficients used in the ratio-space bounds of
\Cref{sec:rho-bounds}.  The purpose is bookkeeping: the main text uses the
normalized coefficients, while \Cref{app:tier1-payoff-decomposition} contains
the finite-sum scenario terms from which those coefficients are computed.

Throughout this appendix, empty sums are interpreted as zero.  The
non-conforming normalized quantities are stated for \(u>0\).  When \(u=0\),
there is no non-conforming group to normalize, and the all-conforming deviation
comparison is handled separately in \Cref{sec:equilibrium-feasibility}.

\subsection{Scale-free aggregate coefficients}
\label{app:scale-free-aggregate-coefficients}

For Tier~1, define the scale-free scenario coefficients by factoring out the
base reward from reward terms and the base penalty from penalty terms:
\begin{equation}
\begin{aligned}
\widehat{rw}^{\TOne}_{E_k,\chi}
&:=
\frac{rw^{\TOne}_{E_k,\chi}}{\BR},
&
\widehat{pn}^{\TOne}_{E_k,\chi}
&:=
\frac{pn^{\TOne}_{E_k,\chi}}{\BP},
\end{aligned}
\label{eq:app-scale-free-scenario-coefficients}
\end{equation}
for each scenario \(E_k\), \(k\in\{1,\ldots,6\}\), and strategy
\(\chi\in\{\cstrat,\ncstrat\}\).  The quantities
\(rw^{\TOne}_{E_k,\chi}\) and \(pn^{\TOne}_{E_k,\chi}\) are defined in
\Cref{app:tier1-payoff-decomposition}; each already includes the relevant
scenario probability factor and binomial mass.

The aggregate scale-free Tier~1 coefficients are obtained by summing over the
six payoff-distinct scenarios:
\begin{equation}
\begin{aligned}
\rwhat^{\TOne}_{\chi}
&:=
\sum_{k=1}^{6}
\widehat{rw}^{\TOne}_{E_k,\chi},
&
\pnhat^{\TOne}_{\chi}
&:=
\sum_{k=1}^{6}
\widehat{pn}^{\TOne}_{E_k,\chi}.
\end{aligned}
\label{eq:app-tier1-scale-free-sums}
\end{equation}
Equivalently, the aggregate reward and penalty quantities can be written as
\begin{equation}
\begin{aligned}
rw^{\TOne}_{\chi}
&=
\BR\,\rwhat^{\TOne}_{\chi},
&
pn^{\TOne}_{\chi}
&=
\BP\,\pnhat^{\TOne}_{\chi}.
\end{aligned}
\label{eq:app-tier1-factorization}
\end{equation}
This factorization is the step that separates the mechanism's combinatorial
payoff structure from the tunable magnitudes \(\BR\) and \(\BP\).

\subsection{Per-agent normalization for base tiers}
\label{app:base-tier-normalization}

Let \(B\in\{\TOne,\TTwo\}\) denote a base tier.  For \(\Nc>0\), the per-agent
scale-free coefficients for a conforming agent are
\begin{equation}
\bar r^B_{\cstrat}
:=
\frac{\rwhat^B_{\cstrat}}{\Nc},
\qquad
\bar p^B_{\cstrat}
:=
\frac{\pnhat^B_{\cstrat}}{\Nc}.
\label{eq:app-normalized-c-coefficients}
\end{equation}
For \(u>0\), the corresponding coefficients for a non-conforming agent are
\begin{equation}
\bar r^B_{\ncstrat}
:=
\frac{\rwhat^B_{\ncstrat}}{u},
\qquad
\bar p^B_{\ncstrat}
:=
\frac{\pnhat^B_{\ncstrat}}{u}.
\label{eq:app-normalized-nc-coefficients}
\end{equation}
Thus the expected payoff before fixed effort costs can be written compactly as
\begin{equation}
\BR\,\bar r^B_{\chi}
-
\BP\,\bar p^B_{\chi},
\qquad
\chi\in\{\cstrat,\ncstrat\}.
\label{eq:app-normalized-payoff-before-costs}
\end{equation}

The normalized reward and penalty gaps used in the IC comparison are
\begin{equation}
\begin{aligned}
\dRbar^B
&:=
\bar r^B_{\cstrat}
-
\bar r^B_{\ncstrat}
=
\frac{\rwhat^B_{\cstrat}}{\Nc}
-
\frac{\rwhat^B_{\ncstrat}}{u},
\end{aligned}
\label{eq:app-normalized-reward-gap}
\end{equation}
\begin{equation}
\begin{aligned}
\dPbar^B
&:=
\bar p^B_{\cstrat}
-
\bar p^B_{\ncstrat}
=
\frac{\pnhat^B_{\cstrat}}{\Nc}
-
\frac{\pnhat^B_{\ncstrat}}{u}.
\end{aligned}
\label{eq:app-normalized-penalty-gap}
\end{equation}
The different normalization factors in \eqref{eq:app-normalized-reward-gap} and \eqref{eq:app-normalized-penalty-gap} reflect the fact that IC compares the payoff of a representative conforming agent with that of a representative non-conforming agent.

\subsection{Normalized ratio bounds before effort costs}
\label{app:normalized-ratio-bounds-before-costs}

Before adding fixed effort costs, the normalized IR condition for a conforming
agent under base tier \(B\) is
\begin{equation}
\BR\,\bar r^B_{\cstrat}
-
\BP\,\bar p^B_{\cstrat}
\ge 0.
\label{eq:app-uncosted-ir-condition}
\end{equation}
Substituting \(\BR=\rrho\BP\) and dividing by \(\BP>0\) gives
\begin{equation}
\rrho\,\bar r^B_{\cstrat}
\ge
\bar p^B_{\cstrat}.
\label{eq:app-uncosted-ir-ratio-form}
\end{equation}
If \(\bar r^B_{\cstrat}>0\), define the no-cost IR threshold
\begin{equation}
\rrho^{B,0}_{\IR}
:=
\frac{\bar p^B_{\cstrat}}{\bar r^B_{\cstrat}}.
\label{eq:app-uncosted-ir-threshold}
\end{equation}
Then the no-cost IR condition is \(\rrho\ge \rrho^{B,0}_{\IR}\).  If
\(\bar r^B_{\cstrat}=0\), \eqref{eq:app-uncosted-ir-condition} must be checked
directly.

For \(u>0\), the normalized IC condition before effort costs is
\begin{equation}
\BR\,\dRbar^B
\ge
\BP\,\dPbar^B.
\label{eq:app-uncosted-ic-condition}
\end{equation}
Substituting \(\BR=\rrho\BP\) yields
\begin{equation}
\rrho\,\dRbar^B
\ge
\dPbar^B.
\label{eq:app-uncosted-ic-ratio-form}
\end{equation}
If \(\dRbar^B\neq0\), define the no-cost IC threshold
\begin{equation}
\rrho^{B,0}_{\IC}
:=
\frac{\dPbar^B}{\dRbar^B}.
\label{eq:app-uncosted-ic-threshold}
\end{equation}
The direction of the bound is determined by the sign of \(\dRbar^B\):
\begin{equation}
\rrho
\begin{cases}
\ge \rrho^{B,0}_{\IC}, & \text{if } \dRbar^B>0,\\[3pt]
\le \rrho^{B,0}_{\IC}, & \text{if } \dRbar^B<0.
\end{cases}
\label{eq:app-uncosted-ic-bound-direction}
\end{equation}
If \(\dRbar^B=0\), no ratio threshold is defined, and the IC condition reduces
to the direct check \(\dPbar^B\le0\).

\subsection{Connection to Tier~2}
\label{app:tier2-normalized-coefficients}

Tier~2 uses the same aggregation and normalization pipeline as Tier~1 after
replacing each Tier~1 scenario-index summand by its entropy-scaled counterpart.
For a scenario \(E_k\), admissible count realization \(q\in I_k\), and strategy
\(\chi\in\{\cstrat,\ncstrat\}\), the scaled scale-free summands have the form
\begin{equation}
\begin{aligned}
\widehat{rw}^{\TTwo}_{E_k,\chi}(q)
&:=
\sigma_q^{(k)}\,
\widehat{rw}^{\TOne}_{E_k,\chi}(q),
\\
\widehat{pn}^{\TTwo}_{E_k,\chi}(q)
&:=
\sigma_q^{(k)}\,
\widehat{pn}^{\TOne}_{E_k,\chi}(q).
\end{aligned}
\label{eq:app-tier2-scaled-coefficients}
\end{equation}
with \(\sigma_q^{(k)}>0\).  Summing these scaled terms gives
\(\rwhat^{\TTwo}_{\chi}\) and \(\pnhat^{\TTwo}_{\chi}\), after which the
normalization in \eqref{eq:app-normalized-c-coefficients} and \eqref{eq:app-normalized-nc-coefficients} applies without change.

Because the scaling factors are positive, Tier~2 preserves the scenario
partition and payoff-side orientation of Tier~1.  It can nevertheless change
the numerical values of \(\bar r^B_\chi\), \(\bar p^B_\chi\), \(\dRbar^B\), and
\(\dPbar^B\), and therefore the resulting ratio thresholds.  The full definition
of the entropy-scaling factor is given in \Cref{app:tier2-entropy-scaling}.

\begin{remark}[Why this appendix avoids expanded threshold fractions]
The fully expanded IR and IC ratios are obtained by substituting the finite sums
from \Cref{app:tier1-payoff-decomposition} into the normalized coefficients
above.  Writing those substitutions as single fractions creates expressions
that are difficult to read in a two-column format.  This appendix therefore
keeps the long finite sums at the scenario-coefficient level and states the
ratio bounds using the normalized coefficients used in the main text.
\end{remark}

%% file: appendix/appD_cost_adjusted_bounds.tex
\section{Cost-Adjusted Bound Derivations}
\label{app:cost-adjusted-bound-derivations}
\label{app:cost-normalized-bounds}

This appendix records the algebra that converts the normalized reward and
penalty coefficients in \Cref{app:aggregated-normalized-bounds} into the
cost-adjusted IC and IR bounds used in \Cref{sec:rho-bounds}.  The derivation is
kept at the coefficient level rather than substituting the full scenario-wise
finite sums, because the expanded expressions are lengthy in a two-column
format.

Throughout this appendix, fix \(\BP>0\) and define
\(\rrho=\BR/\BP\).  All feasible regions below are understood to be intersected
with the economically meaningful domain \(\rrho>0\).  For base tiers, let
\(B\in\{\TOne,\TTwo\}\).  The normalized base-tier payoff before costs is
\begin{equation}
    \BR\,\bar r^B_{\chi}
    -
    \BP\,\bar p^B_{\chi},
    \qquad
    \chi\in\{\cstrat,\ncstrat\},
    \label{eq:app-cost-normalized-payoff-before-costs}
\end{equation}
where \(\bar r^B_\chi\) and \(\bar p^B_\chi\) are defined in
\Cref{app:base-tier-normalization}.  Let \(c_{\cstrat}\) and
\(c_{\ncstrat}\) be fixed per-agent effort costs.  These costs are not scaled by
\(\BR\) or \(\BP\), so after substituting \(\BR=\rrho\BP\), they appear only
through the normalized quantities \(c_{\cstrat}/\BP\) and
\((c_{\cstrat}-c_{\ncstrat})/\BP\).

\subsection{Base-tier IC derivation}
\label{app:base-tier-ic-derivation}

For \(u>0\), the cost-adjusted IC comparison between choosing \(\cstrat\) and
choosing \(\ncstrat\) is
\begin{equation}
\begin{aligned}
&\BR\,\bar r^B_{\cstrat}
 -\BP\,\bar p^B_{\cstrat}
 -c_{\cstrat}
\\
&\qquad\ge
 \BR\,\bar r^B_{\ncstrat}
 -\BP\,\bar p^B_{\ncstrat}
 -c_{\ncstrat}.
\end{aligned}
\label{eq:app-cost-ic-payoff-comparison}
\end{equation}
Using the reward and penalty gaps
\(\dRbar^B=\bar r^B_{\cstrat}-\bar r^B_{\ncstrat}\) and
\(\dPbar^B=\bar p^B_{\cstrat}-\bar p^B_{\ncstrat}\), this is equivalent to
\begin{equation}
    \BR\,\dRbar^B
    -
    \BP\,\dPbar^B
    \ge
    c_{\cstrat}-c_{\ncstrat}.
    \label{eq:app-cost-ic-gap-form}
\end{equation}
Substituting \(\BR=\rrho\BP\) and dividing by \(\BP>0\) gives
\begin{equation}
    \rrho\,\dRbar^B
    \ge
    \dPbar^B
    +
    \frac{c_{\cstrat}-c_{\ncstrat}}{\BP}.
    \label{eq:app-cost-ic-ratio-form}
\end{equation}
For compactness, define the IC residual
\begin{equation}
    \Theta^B_{\IC}
    :=
    \dPbar^B
    +
    \frac{c_{\cstrat}-c_{\ncstrat}}{\BP}.
    \label{eq:app-ic-residual}
\end{equation}
Then the IC condition is simply
\begin{equation}
    \rrho\,\dRbar^B \ge \Theta^B_{\IC}.
    \label{eq:app-cost-ic-residual-form}
\end{equation}
If \(\dRbar^B\ne0\), the corresponding threshold is
\begin{equation}
    \rrho^B_{\IC}
    :=
    \frac{\Theta^B_{\IC}}{\dRbar^B}.
    \label{eq:app-cost-ic-threshold-compact}
\end{equation}
Thus the direction of the IC bound is determined by the sign of the normalized
reward gap:
\begin{equation}
\rrho
\begin{cases}
\ge \rrho^B_{\IC}, & \text{if } \dRbar^B>0,\\[3pt]
\le \rrho^B_{\IC}, & \text{if } \dRbar^B<0.
\end{cases}
\label{eq:app-cost-ic-direction}
\end{equation}
If \(\dRbar^B=0\), the ratio term drops out and IC reduces to the direct check
\begin{equation}
    \Theta^B_{\IC} \le 0.
    \label{eq:app-cost-ic-zero-gap-check}
\end{equation}

\subsection{Base-tier IR derivation}
\label{app:base-tier-ir-derivation}

For \(\Nc>0\), the cost-adjusted IR condition for an agent choosing
\(\cstrat\) is
\begin{equation}
    \BR\,\bar r^B_{\cstrat}
    -
    \BP\,\bar p^B_{\cstrat}
    -
    c_{\cstrat}
    \ge 0.
    \label{eq:app-cost-ir-payoff-comparison}
\end{equation}
Substituting \(\BR=\rrho\BP\) and dividing by \(\BP>0\) gives
\begin{equation}
    \rrho\,\bar r^B_{\cstrat}
    \ge
    \bar p^B_{\cstrat}
    +
    \frac{c_{\cstrat}}{\BP}.
    \label{eq:app-cost-ir-ratio-form}
\end{equation}
Define the IR residual
\begin{equation}
    \Theta^B_{\IR}
    :=
    \bar p^B_{\cstrat}
    +
    \frac{c_{\cstrat}}{\BP}.
    \label{eq:app-ir-residual}
\end{equation}
Then IR is
\begin{equation}
    \rrho\,\bar r^B_{\cstrat}
    \ge
    \Theta^B_{\IR}.
    \label{eq:app-cost-ir-residual-form}
\end{equation}
When \(\bar r^B_{\cstrat}>0\), define
\begin{equation}
    \rrho^B_{\IR}
    :=
    \frac{\Theta^B_{\IR}}{\bar r^B_{\cstrat}}.
    \label{eq:app-cost-ir-threshold-compact}
\end{equation}
The IR condition is then the lower-bound requirement
\begin{equation}
    \rrho \ge \rrho^B_{\IR}.
    \label{eq:app-cost-ir-lower-bound}
\end{equation}
If \(\bar r^B_{\cstrat}=0\), no ratio threshold is defined; the condition must
be evaluated directly from \eqref{eq:app-cost-ir-payoff-comparison}.  In the
usual case where \(c_{\cstrat}\ge0\) and \(\bar p^B_{\cstrat}\ge0\), this direct
check fails unless both the conforming penalty coefficient and the conforming
cost are zero.

\subsection{Feasible-ratio cases for base tiers}
\label{app:base-tier-feasible-ratio-cases}

The feasible set for base tier \(B\) is the intersection of the IC and IR 
regions.  When \(\dRbar^B>0\), both IC and IR impose lower bounds, so feasibility
is governed by
\begin{equation}
    \rrho^{B,*}
    :=
    \max\{\rrho^B_{\IR},\rrho^B_{\IC}\}.
    \label{eq:app-rho-star-lower-regime}
\end{equation}
Any positive \(\rrho\ge\rrho^{B,*}\) satisfies both cost-adjusted sufficient
conditions.

When \(\dRbar^B<0\), IC imposes an upper bound while IR imposes a lower bound.
The sufficient conditions have a nonempty feasible interval only if
\begin{equation}
    \rrho^B_{\IR}
    \le
    \rrho^B_{\IC}.
    \label{eq:app-upper-lower-overlap-condition}
\end{equation}
If this overlap condition fails, no value of \(\rrho\) satisfies both the
base-tier IC and IR sufficient conditions for the modeled construction and the
given parameter tuple.

When \(\dRbar^B=0\), IC imposes no ratio threshold.  If
\(\Theta^B_{\IC}\le0\), feasibility is determined by IR alone; if
\(\Theta^B_{\IC}>0\), IC cannot be satisfied by changing \(\rrho\).

\subsection{Agent-specific form for Tier~3}
\label{app:tier3-agent-cost-bound-form}

Tier~3 uses the same cost-adjustment algebra, but after stake-weighted
redistribution the reward and penalty coefficients are agent-specific.  Let
\(\bar r^{\TThree\mid B}_a(\chi)\) and
\(\bar p^{\TThree\mid B}_a(\chi)\) denote agent \(a\)'s scale-free coefficients
under Tier~3 applied to base tier \(B\in\{\TOne,\TTwo\}\).  Define
\begin{equation}
\begin{aligned}
\dRbar^{\TThree\mid B}_a
&:=
\bar r^{\TThree\mid B}_a(\cstrat)
-
\bar r^{\TThree\mid B}_a(\ncstrat),
\\
\dPbar^{\TThree\mid B}_a
&:=
\bar p^{\TThree\mid B}_a(\cstrat)
-
\bar p^{\TThree\mid B}_a(\ncstrat).
\end{aligned}
\label{eq:app-tier3-agent-gaps-cost-appendix}
\end{equation}
Then the agent-specific Tier~3 IC condition is
\begin{equation}
    \rrho\,\dRbar^{\TThree\mid B}_a
    \ge
    \dPbar^{\TThree\mid B}_a
    +
    \frac{c_{\cstrat}-c_{\ncstrat}}{\BP}.
    \label{eq:app-tier3-agent-ic-cost-condition}
\end{equation}
Similarly, the Tier~3 IR condition for agent \(a\) choosing \(\cstrat\) is
\begin{equation}
    \rrho\,\bar r^{\TThree\mid B}_a(\cstrat)
    \ge
    \bar p^{\TThree\mid B}_a(\cstrat)
    +
    \frac{c_{\cstrat}}{\BP}.
    \label{eq:app-tier3-agent-ir-cost-condition}
\end{equation}
The associated thresholds and bound directions are obtained exactly as in the
base-tier derivation, replacing the common strategy-level gaps by the
agent-specific quantities in \eqref{eq:app-tier3-agent-gaps-cost-appendix}.  Thus
Tier~3 feasibility must be checked agent by agent.  The stake-share definitions
that determine these agent-level coefficients are given in
\Cref{app:tier3-stake-redistribution}.

\begin{remark}[Why the appendix uses residual notation]
The expanded expressions for \(\rrho^B_{\IC}\) and \(\rrho^B_{\IR}\) are obtained
by substituting the normalized coefficients from
\Cref{app:aggregated-normalized-bounds}.  The residual notation
\(\Theta^B_{\IC}\) and \(\Theta^B_{\IR}\) keeps the derivation readable in a
two-column layout while preserving the same algebraic content as the fully
expanded cost-adjusted thresholds.
\end{remark}

%% file: appendix/appE_tier2_tier3_details.tex
\section{Tier~2 Entropy Scaling and Tier~3 Stake Redistribution Details}
\label{app:tier2-tier3-details}

This appendix gives the algebraic definitions behind the two extensions
summarized in \Cref{sec:extensions}.  Tier~2 modifies the magnitudes of the
Tier~1 reward and penalty summands through a positive entropy-based scaling
factor.  Tier~3 leaves the selected aggregation outcome fixed, but replaces the
equal split of selected payoff pools with stake-weighted redistribution.

\subsection{Tier~2 entropy-based scaling}
\label{app:tier2-entropy-scaling}

Tier~2 uses the same scenario partition, admissible index sets, aggregation
outcome, and payoff-side assignment as Tier~1.  Its only change is to scale each
scenario-index Tier~1 summand before aggregation.

For a payoff-distinct scenario \(E_k\) and an admissible count realization
\(q\in I_k\), let \(n_t^{(k)}(q)\) and \(n_f^{(k)}(q)\) denote the final counts
of reports equal to \(t\) and \(f\), respectively, after the deterministic
non-conforming reports have been included.  Thus
\begin{equation}
    n_t^{(k)}(q)+n_f^{(k)}(q)=\NA.
    \label{eq:app-tier2-count-total}
\end{equation}
The associated vote-split distribution is
\begin{equation}
    P_q^{(k)}
    :=
    \left(
    \frac{n_t^{(k)}(q)}{\NA},
    \frac{n_f^{(k)}(q)}{\NA}
    \right).
    \label{eq:app-tier2-vote-split}
\end{equation}
Let \(H_2(\cdot)\) denote base-2 Shannon entropy.  The entropy-based
decisiveness score is
\begin{equation}
    d^{(k)}(q)
    :=
    1-H_2\!\left(P_q^{(k)}\right).
    \label{eq:app-tier2-decisiveness}
\end{equation}
For binary vote shares, \(H_2(P_q^{(k)})\in[0,1]\), so
\(d^{(k)}(q)\in[0,1]\).  The score is small when the final vote split is close
to even and large when the split is more decisive.  This entropy-based
disagreement measure follows the entropy-based incentive mechanism of
\cite{chen2022implementation}, where Shannon entropy quantifies the degree of voting
disagreement used to set rewards and penalties.

The Tier~2 scaling factor is
\begin{equation}
    \sigma_q^{(k)}
    :=
    s_\beta\!\left(d^{(k)}(q)\right),
    \qquad
    s_\beta(d)
    :=
    1+\beta\left(d-\frac{1}{2}\right),
    \label{eq:app-tier2-scaling-factor}
\end{equation}
where \(\beta\in(0,2)\).  Hence
\begin{equation}
    1-\frac{\beta}{2}
    \le
    \sigma_q^{(k)}
    \le
    1+\frac{\beta}{2},
    \label{eq:app-tier2-scaling-range}
\end{equation}
and in particular \(\sigma_q^{(k)}>0\).  The default numerical setting used in
the sensitivity experiments is \(\beta=1\), which gives
\(\sigma_q^{(k)}\in[0.5,1.5]\).

Let \(\widehat{rw}^{\TOne}_{E_k,\chi}(q)\) and
\(\widehat{pn}^{\TOne}_{E_k,\chi}(q)\) denote the scale-free Tier~1 reward and
penalty summands for strategy \(\chi\in\{\cstrat,\ncstrat\}\) at realization
\((E_k,q)\).  Tier~2 defines the corresponding scale-free summands by
\begin{equation}
\begin{aligned}
\widehat{rw}^{\TTwo}_{E_k,\chi}(q)
&:=
\sigma_q^{(k)}\,
\widehat{rw}^{\TOne}_{E_k,\chi}(q),
\\
\widehat{pn}^{\TTwo}_{E_k,\chi}(q)
&:=
\sigma_q^{(k)}\,
\widehat{pn}^{\TOne}_{E_k,\chi}(q).
\end{aligned}
\label{eq:app-tier2-scaled-summands-expanded}
\end{equation}
Aggregating over all scenario-index terms gives
\begin{equation}
\begin{aligned}
\rwhat^{\TTwo}_{\chi}
&:=
\sum_{k=1}^{6}\sum_{q\in I_k}
\widehat{rw}^{\TTwo}_{E_k,\chi}(q),
\\
\pnhat^{\TTwo}_{\chi}
&:=
\sum_{k=1}^{6}\sum_{q\in I_k}
\widehat{pn}^{\TTwo}_{E_k,\chi}(q).
\end{aligned}
\label{eq:app-tier2-aggregate-coefficients}
\end{equation}
The actual Tier~2 aggregate quantities are therefore
\begin{equation}
    rw^{\TTwo}_{\chi}=\BR\,\rwhat^{\TTwo}_{\chi},
    \qquad
    pn^{\TTwo}_{\chi}=\BP\,\pnhat^{\TTwo}_{\chi}.
    \label{eq:app-tier2-actual-aggregates}
\end{equation}
After these scaled sums are formed, the same normalization pipeline in
\Cref{app:base-tier-normalization} applies.  Thus Tier~2 changes the numerical
values of the normalized coefficients, but not the ratio-space form of the base
tier IC and IR bounds.

\begin{remark}[Why positivity matters]
The condition \(\sigma_q^{(k)}>0\) ensures that Tier~2 does not reverse the
payoff orientation of any scenario-index summand.  A reward-side term remains a
reward-side term and a penalty-side term remains a penalty-side term.  Tier~2
can still change the feasible ratio region because the summands are reweighted
before aggregation.
\end{remark}

\subsection{Tier~3 stake-weighted redistribution}
\label{app:tier3-stake-redistribution}

Tier~3 can be applied on top of either selected base tier
\(B\in\{\TOne,\TTwo\}\).  The selected base tier determines the final
aggregation outcome and the total reward and penalty pool contributions for
each scenario-index realization.  Tier~3 changes only how these selected pools
are divided among agents.

Fix a scenario-index realization \((E_k,q)\).  If the final outcome is
\(\Y_{kq}\in\{t,f\}\), define the reward-receiving and penalty-bearing sides by
\begin{equation}
\begin{aligned}
\mathcal W_{kq}
&:=
\{a: y_a=\Y_{kq}\},
\\
\mathcal L_{kq}
&:=
\{a: y_a\ne\Y_{kq}\}.
\end{aligned}
\label{eq:app-tier3-payoff-sides}
\end{equation}
If \(\Y_{kq}=\tau\), both sides are treated as empty and no reward or penalty
pool is allocated.

Let \(\Omega_a>0\) denote agent \(a\)'s stake.  For nonempty payoff sides, the
Tier~3 stake shares are
\begin{equation}
\phi_a^{\mathrm{maj}}(k,q)
:=
\begin{cases}
\displaystyle
\frac{\Omega_a}{\sum_{b\in\mathcal W_{kq}}\Omega_b},
& a\in\mathcal W_{kq},\\[8pt]
0, & \text{otherwise},
\end{cases}
\label{eq:app-tier3-majority-share}
\end{equation}
\begin{equation}
\phi_a^{\mathrm{min}}(k,q)
:=
\begin{cases}
\displaystyle
\frac{\Omega_a}{\sum_{b\in\mathcal L_{kq}}\Omega_b},
& a\in\mathcal L_{kq},\\[8pt]
0, & \text{otherwise}.
\end{cases}
\label{eq:app-tier3-minority-share}
\end{equation}
When \(\mathcal W_{kq}\) or \(\mathcal L_{kq}\) is empty, the corresponding
shares are defined to be zero.

Let \(RW^B_{E_k}(q)\) and \(PN^B_{E_k}(q)\) denote the selected base-tier total
reward and penalty pool contributions for realization \((E_k,q)\).  The Tier~3
agent-level contributions are
\begin{equation}
\begin{aligned}
rw^{\TThree\mid B}_{E_k}(q,a)
&:=
\phi_a^{\mathrm{maj}}(k,q)\,
RW^B_{E_k}(q),
\\
pn^{\TThree\mid B}_{E_k}(q,a)
&:=
\phi_a^{\mathrm{min}}(k,q)\,
PN^B_{E_k}(q).
\end{aligned}
\label{eq:app-tier3-agent-contributions}
\end{equation}
Because the stake shares sum to one on each nonempty payoff side, summing over
agents recovers the selected base-tier pools:
\begin{equation}
\begin{aligned}
\sum_{a\in\mathcal W_{kq}}
rw^{\TThree\mid B}_{E_k}(q,a)
&=
RW^B_{E_k}(q),
\\
\sum_{a\in\mathcal L_{kq}}
pn^{\TThree\mid B}_{E_k}(q,a)
&=
PN^B_{E_k}(q).
\end{aligned}
\label{eq:app-tier3-pool-identity}
\end{equation}
This is a pool-preservation identity only.  It does not imply that individual
payoffs, IC gaps, or IR thresholds are inherited from the selected base tier.

For ratio-space analysis, write the selected base-tier pools in scale-free form
as
\begin{equation}
\begin{aligned}
RW^B_{E_k}(q)
&=
\BR\,\widehat{RW}^{B}_{E_k}(q),
\\
PN^B_{E_k}(q)
&=
\BP\,\widehat{PN}^{B}_{E_k}(q).
\end{aligned}
\label{eq:app-tier3-scale-free-pools}
\end{equation}
For agent \(a\) and strategy choice \(\chi\in\{\cstrat,\ncstrat\}\), the
agent-level scale-free coefficients used in \Cref{app:tier3-agent-cost-bound-form}
are obtained by evaluating the stake shares and selected base-tier pools under
the corresponding profile and summing over scenario-index terms:
\begin{equation}
\begin{aligned}
\bar r^{\TThree\mid B}_a(\chi)
&:=
\sum_{k=1}^{6}\sum_{q\in I_k}
\phi_a^{\mathrm{maj}}(k,q;\chi)\,
\widehat{RW}^{B}_{E_k}(q;\chi),
\\
\bar p^{\TThree\mid B}_a(\chi)
&:=
\sum_{k=1}^{6}\sum_{q\in I_k}
\phi_a^{\mathrm{min}}(k,q;\chi)\,
\widehat{PN}^{B}_{E_k}(q;\chi).
\end{aligned}
\label{eq:app-tier3-agent-scale-free-coefficients}
\end{equation}
Here the argument \(\chi\) indicates that the quantities are evaluated when
agent \(a\) chooses \(\chi\), with the comparison profile specified by the IC or
IR condition being evaluated.  These coefficients are agent-specific because
both the payoff side membership and the stake shares can depend on the agent and
on the realized scenario-index term.

\begin{remark}[No common Tier~3 threshold]
Even if the selected base tier has common strategy-level coefficients,
Tier~3 generally does not.  Agents choosing the same strategy can receive
different stake shares, so Tier~3 IC and IR conditions must be checked at the
agent level using the coefficients in \eqref{eq:app-tier3-agent-scale-free-coefficients}.
\end{remark}

%% file: appendix/appF_monte_carlo_and_sensitivity_details.tex
\section{Monte Carlo and Additional Sensitivity Details}
\label{app:mc-sensitivity-details}
\label{app:mc-details}

This appendix reports implementation details for the numerical consistency check
and collects supplementary sensitivity plots for the ratio-space analysis in
\Cref{sec:validation-sensitivity}. The simulation is used only to check the
closed-form Tier~1 quantities against Monte Carlo estimates; it is not the main
theoretical contribution of the paper.

\subsection{Monte Carlo Procedure}
\label{app:mc-procedure}

For each parameter tuple, the simulation uses \(10^5\) independent runs. In each
run, the latent label \(\G\in\{t,f\}\) is drawn according to
\[
    \Prb(\G=t)=p,
    \qquad
    \Prb(\G=f)=1-p.
\]
Agents choosing \(\cstrat\) report conditionally independent private signals:
they report \(\G\) with probability \(1-\eps\) and the opposite label with
probability \(\eps\). Agents choosing \(\ncstrat\) ignore their private signals
and submit the deterministic prior-informed report \(V_{\ncstrat}(p)\) defined
in \Cref{eq:nc-rule}. After all reports are generated, the final aggregation
outcome is determined by majority vote over the full report profile. If the
vote is tied, both rewards and penalties are set to zero in that run.

The simulation estimates the normalized per-agent reward and penalty quantities
for strategies \(\cstrat\) and \(\ncstrat\). These estimates are then converted
into scale-free coefficients by removing the reward and penalty magnitudes, so
that the Monte Carlo estimates can be compared with the closed-form quantities
used in \Cref{sec:rho-bounds}.

\subsection{Parameter Tuples}
\label{app:mc-parameters}

Odd-size committees are used for the main consistency runs. These runs use
\(\NA\in\{5,11\}\), \(u\in\{0,2,\lfloor \NA/2\rfloor\}\), error rates
\(\eps\in\{0.05,0.15,0.30,0.45\}\), and prior values
\(p\in\{0.25,0.30,0.50,0.70,0.75\}\). Even-size committees are included to
check tie handling. These runs use \(\NA\in\{8,10\}\), error rates
\(\eps\in\{0.10,0.30\}\), and values of \(u\) chosen so that
\(\Nc=\NA-u\) is even when tie configurations are intended to have nonzero
probability.

\subsection{Validation Quantities}
\label{app:mc-validation-quantities}

The threshold quantities are computed from the same scale-free objects used in
the closed-form derivation:
\begin{equation}
\begin{aligned}
    \rrho_{\IR}^{\TOne}
    &=
    \frac{\widehat p_{\cstrat}^{\TOne}}
         {\widehat r_{\cstrat}^{\TOne}},
    \\
    \rrho_{\IC}^{\TOne}
    &=
    \frac{\dPbar^{\TOne}}{\dRbar^{\TOne}}.
\end{aligned}
\label{eq:appF-mc-thresholds}
\end{equation}
The sign of \(\dRbar^{\TOne}\) determines whether the IC condition is a lower
or upper bound. When \(\dRbar^{\TOne}\) is close to zero, the ratio
\(\rrho_{\IC}^{\TOne}\) can have a large residual even when the underlying
reward and penalty components agree closely. Such cases are treated as
ratio-sensitive rather than as structural inconsistencies when the supporting
quantities, IC direction, and feasible-region classification remain consistent.

\begin{table}[t]
\centering
\scriptsize
\setlength{\tabcolsep}{3pt}
\caption{Representative Monte Carlo consistency summary for Tier~1. Ratio-sensitive cases are accepted when the supporting quantities, IC bound direction, and feasible-region classification remain consistent.}
\label{tab:mc-validation-summary}
\begin{tabular}{@{}p{0.55\columnwidth}cc@{}}
\toprule
Metric & Odd & Even \\
\midrule
Total tuples evaluated & 100 & 50 \\
Accepted tuples & 100/100 & 50/50 \\
\(\rrho_{\IR}^{\TOne}\) direct pass & 100/100 & 50/50 \\
\(\rrho_{\IC}^{\TOne}\) direct pass & 74/100 & 34/50 \\
\(\rrho_{\IC}^{\TOne}\) ratio-sensitive but supported & 26/100 & 16/50 \\
Feasible / infeasible tuples & 66/34 & 37/13 \\
Feasible classification match & 100/100 & 50/50 \\
IC direction match & 100/100 & 50/50 \\
Max per-agent quantity error & 0.0016 & 0.0013 \\
Max \(\rrho_{\IR}^{\TOne}\) error & 0.0064 & 0.0037 \\
Max \(\rrho_{\IC}^{\TOne}\) error & 2.7748 & 0.5216 \\
\bottomrule
\end{tabular}
\end{table}

\subsection{Additional Sensitivity Figures}
\label{app:extra-sensitivity-figures}

\begin{figure*}[htp]
    \centering
    \includegraphics[width=0.95\textwidth]{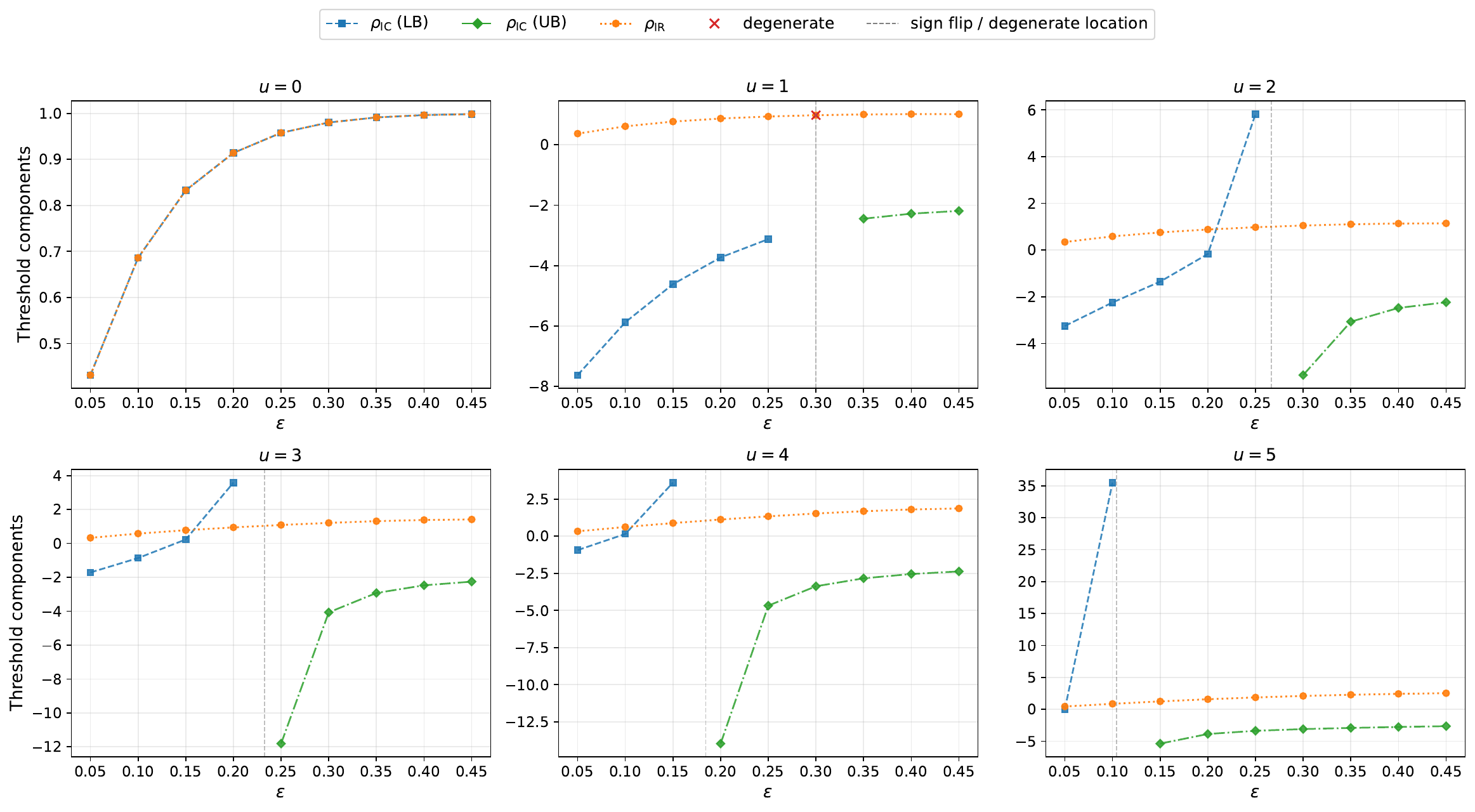}
    \caption{Threshold decomposition across all \(u\) values for the sensitivity path with \(\NA=11\) and \(p=0.3\). The panels show how \(\rrho_{\IR}^{\TOne}\), the lower-bound branch of \(\rrho_{\IC}^{\TOne}\), and the upper-bound branch of \(\rrho_{\IC}^{\TOne}\) vary with \(\eps\). Degenerate markers indicate points where the ratio threshold is not defined because the reward-side gap is approximately zero.}
    \label{fig:sensitivity-decomp-all-u}
\end{figure*}

The main sensitivity plot is shown in \Cref{fig:sensitivity-epsilon-u}. This
subsection provides a supplementary threshold decomposition that supports the
discussion in \Cref{sec:validation-sensitivity}.
\Cref{fig:sensitivity-decomp-all-u} separates the IR threshold from the
lower-bound and upper-bound branches of the IC threshold across all values of
\(u\). Reading across the panels, the upper-bound branch of \(\rrho_{\IC}^{\TOne}\)
appears precisely in the regime where the normalized reward gap \(\dRbar^{\TOne}\)
turns negative; the marked transition in each panel locates this sign change,
which is the same small-denominator regime responsible for the large
\(\rrho_{\IC}^{\TOne}\) residuals reported in \Cref{tab:mc-validation-summary}.
Below this transition, IC imposes a lower bound and feasibility is governed by
\(\max\{\rrho_{\IR}^{\TOne},\rrho_{\IC}^{\TOne}\}\); above it, IC imposes an upper
bound and feasibility requires the overlap condition
\(\rrho_{\IR}^{\TOne}\le\rrho_{\IC}^{\TOne}\).